\documentclass{llncs}

\usepackage[utf8]{inputenc}
\usepackage[english]{babel}
\usepackage[T1]{fontenc}

\usepackage{multicol}
\usepackage{listings}
\usepackage{graphicx}
\usepackage{hyperref}
\usepackage{amssymb}
\usepackage{calrsfs}
\usepackage{amsfonts}
\usepackage{amsmath,amssymb}
\usepackage{mathtools}
\usepackage{varioref}
\usepackage{xcolor}
\usepackage{comment}
\usepackage{dsfont}
\usepackage{syntax}
\usepackage{float}
\usepackage{wrapfig}
\usepackage{fancyvrb}
\usepackage{bera}
\usepackage{enumerate}
\usepackage{pxfonts}

\usepackage{pgf}
\usepackage{float}
\usepackage{tikz}
\usetikzlibrary{arrows,automata,shapes, positioning} 
\usepackage{caption}
\usepackage{subcaption}
\makeatletter
\providecommand*{\twoheadrightarrowfill@}{%
  \arrowfill@\relbar\relbar\twoheadrightarrow
}
\providecommand*{\xtwoheadrightarrow}[2][]{%
  \ext@arrow 0579\twoheadrightarrowfill@{#1}{#2}%
}
\makeatother

\lstdefinestyle{pseudo}
{%
backgroundcolor=\color{white},%
frame=lines,%
keywordstyle=\BeraMonottfamily\bfseries,%
morekeywords={spawn,new,def,int,bool,chan,true,false}}

\newcommand{\La}{\ensuremath{\langle}}
\newcommand{\Ra}{\ensuremath{\rangle}}
\newcommand{\tuple}[1]{\La #1 \Ra}
\newcommand{\sem}[1]{[\![ #1 ]\!]}
\newcommand{\relt}[1]{\xrightarrow{#1}}
\newcommand{\reltt}[1]{\xtwoheadrightarrow{#1}}

\newcommand{\abs}[1]{\ensuremath{| #1 |}}

\newcommand{\eval}{\mathrm{eval}}
\newcommand{\At}{\mathrm{Atoms}}

\newcommand{\matches}{\mathrm{matches}}
\newcommand{\map}{\mathrm{map}}
\newcommand{\Post}{\mathrm{Post}}
\newcommand{\shape}{\mathit{shape}}

\newcommand{\sref}[1]{Sec.~\ref{#1}}
\newcommand{\fref}[1]{Fig.~\ref{#1}}
\newtheorem{mydef}{Definition}

\title{Static Analysis of Communicating Processes using Symbolic
  Transducers (Extended version)}

\author{Vincent Botbol$^{1,2}$ \institute{$^1$Sorbonne Universités,
    UPMC Univ Paris 06, CNRS, LIP6 UMR 7606, 4 place Jussieu 75005
    Paris\\$^2$CEA, LIST, Software Reliability and Security
    Laboratory, 91191 Gif-sur-Yvette France\\}
  \email{vincent.botbol@cea.fr} \and\\ Emmanuel Chailloux$^1$
  \email{emmanuel.chailloux@lip6.fr} \and\\ Tristan {Le Gall}$^2$
  \email{tristan.le\_gall@cea.fr} }

\begin{document}

\maketitle
    
\begin{abstract}

We present a general model allowing static analysis based on abstract
interpretation for systems of communicating processes. Our technique,
inspired by Regular Model Checking, represents set of program states
as lattice automata and programs semantics as symbolic
transducers. This model can express dynamic creation/destruction of
processes and communications. Using the abstract interpretation
framework, we are able to provide a sound over-approximation of the
reachability set of the system thus allowing us to prove safety
properties. We implemented this method in a prototype that targets the
MPI library for C programs.

\end{abstract}

This report is an extended version of the research paper submitted to
VMCAI 2017.

\section{Introduction}

The static analysis of concurrent programs faces several well-known
issues, including how to handle dynamical process creation. This last
one is particularly challenging considering that the state space of
the concurrent system may not be known nor bounded statically, which
depends on the number and the type of variables of the program.

In order to overcome this issue, we combine a symbolic representation
based on regular languages (like the one used in Regular Model
Checking~\cite{DBLP:conf/concur/AbdullaJNS04}) with a fixed-point
analysis based on abstract interpretation~\cite{CousotCousot77-1}. We
define the abstract semantics of a concurrent program by using of a
symbolic finite-state transducer~\cite{DBLP:conf/popl/VeanesHLMB12}. A
(classical) finite-state transducer T encodes a set of rules to
rewrite words over a finite alphabet. In a concurrent program, if each
process only has a finite number of states, then we can represent a
set of states of the concurrent program by a language and the
transition function by a transducer. However, this assumption does not
hold since we consider processes with infinite state space, so we have
to represent a set of states of the concurrent program by a lattice
automaton~\cite{DBLP:conf/sas/GallJ07} and its transition function by
a lattice transducer, a new kind of symbolic transducers that we define
in this paper. Lattice Automata are able to recognize languages over an
infinite alphabet. This infinite alphabet is an abstract domain
(intervals, convex polyhedra, etc.) that abstracts process states.

We show, on~\fref{fig:running-example} (detailed
in~\sref{sec:concrete-semantics}), the kind of programs our method is
able to analyse. This program generates an unbounded sequence of
processes $\{id=0, x=5\}; \{id=1, x=9\}; \{id=2, x=13\}; \dots$ We
want to prove safety properties such as: $x = 5 + \text{id} \times 4$
holds for every process when it reached its final location
$l_{\text{~9}}$. The negation of this property is encoded as a lattice
automaton $Bad$~(\fref{fig:safetyproperty}) that recognizes the
language of all bad configurations. Our verification algorithm is to
compute an over-approximation of the reachability set $Reach$, also
represented by a lattice automaton, then, by testing the emptiness of
the intersection of the languages, we are able to prove this property
: $\mathcal{L}(Reach) \cap \mathcal{L}(Bad) = \emptyset$.

\begin{figure}[!h]
  \vspace*{-2em}
  \begin{minipage}[b]{.45\textwidth}
    \begin{lstlisting}%
      [xleftmargin=.2\textwidth, language=C,basicstyle=\ttfamily\scriptsize, numbers=left]
    if (id==0) 
      x := 1
    else
      receive(any_id,x);

    create(next);
    x := x+4;
    send(next,x)
  \end{lstlisting}     
  \captionof{figure}{\label{fig:running-example}Program example}
  \end{minipage}
  \begin{minipage}[b]{.5\textwidth}
    \centering
    \begin{tikzpicture}[->,>=stealth',shorten >=1pt,auto,node distance=2cm,
        semithick]
      \tikzset{initial text={}}
      \tikzstyle{every state}=[]
      
      \node[initial above, state, inner sep=0pt, minimum size=10pt]   (A) {};
      \node[accepting, state, inner sep=0pt, minimum size=10pt] (B) [below of=A]   {};

      \path 
      (A) edge [loop right] node {$\top$} (A)
      
      (A) edge node
      {$ l_{\text{~9}} \times \text{id}\geq 0 \times x \neq 5 + \text{id} * 4$} (B)

      (B) edge [loop right] node {$\top$} (B);

    \end{tikzpicture}
    \captionof{figure}{\label{fig:safetyproperty}Bad configurations}

  \end{minipage}
\end{figure}


\textbf{Related works.} There are many works aiming at the static
analysis of concurrent programs. Some of them use the abstract
interpretation theory, but either they do not allow dynamic process
creation~\cite{mine2014relational} and/or use a different memory
model~\cite{CheckMate09} or do not consider numerical
properties~\cite{DBLP:journals/corr/abs-0802-0188}. In
~\cite{DBLP:conf/popl/VeanesHLMB12}, the
authors defined symbolic transducers but they did not consider to
raise it to the verification of concurrent programs. In
~\cite{DBLP:conf/cav/BouajjaniHV04}, there is
the same kind of representation that considers infinite state system
but can only model finite-state processes. The authors
of~\cite{2009arXiv0910.5833C} present a modular static analysis
framework targeting POSIX threads. Their model allows dynamic thread
creation but lack communications between threads. More practically,
~\cite{vo2009formal} is a formal
verification tool using a dynamical analysis based on model checking aiming at
the detection of deadlocks in Message Passing
Interface~\cite{Snir:1998:MCR:552013} (MPI) programs but this analysis
is not sound and also does not compute the value of the variables.

\textbf{Contributions.} In this article, we define an expressive
concurrency language with communication primitives and dynamic process
creation. We introduce its concrete semantics in terms of symbolic
rewriting rules. Then, we give a way to abstract multi-process program
states as a lattice automaton and also abstract our semantics into a
new kind of symbolic transducer and specific rules. We also give
application algorithms to define a global transition function and
prove their soundness. A fixpoint computation is given to obtain the
reachability set. Finally, in order to validate the approach, we
implemented a prototype as a Frama-C~\cite{Kirchner2015} plug-in which
targets a subset of MPI using the abstract
domain library: Apron~\cite{Jeannet2009}.


\textbf{Outline.} In \sref{sec:concrete-semantics}, we present the
concurrent language and its semantics definition, encoded by rewriting
rules and a symbolic transducer. Then, \sref{sec:abstract-semantics}
presents the abstract semantics and the algorithms used to compute the
over-approximation of the reachability set of a program. In
\sref{sec:mpi}, we detail the implementation of our prototype
targeting a subset of MPI which is mapped by the given semantics and
run it on some examples~(\sref{sec:exp}). We discuss about the
potential and the future works of our method in~\sref{sec:conclusion}.

\section{Programming language and its Concrete Semantics}
\label{sec:concrete-semantics}

We present a small imperative language augmented with communications
primitives such as unicast and multicast
communications, and dynamical process creation. These primitives are
the core of many parallel programming languages and libraries, such
as MPI.

\subsection{Language definition}
\label{sec:language}

In our model, memory is distributed: each process executes the same code, with its
own set of variables. For the sake of clarity, all variables and
expressions have the same type (integer), and we omit the declaration
of the variables. Process identifiers are also integers.
\medskip

\begin{minipage}[t]{.45\textwidth}
\begin{grammar}

<program> ::= <instrs>

<instrs> ::= <instr> ';' <instrs>

<id> ::= <expr> \alt {\bf any\_id}
\end{grammar}
\end{minipage}
\begin{minipage}[t]{.45\textwidth}
\begin{grammar}
<instr> ::= '\{' <instrs> '\}' \alt <ident> ':=' <expr>
\alt {\bf if} '('<expr>')'  <instr> [{\bf else} <instr>]
\alt {\bf while} '('<expr>')' <instr>
\alt {\bf create} '(' <ident> ')'
\alt {\bf send} '(' <id> ',' <ident> ')'
\alt {\bf receive} '(' <id> ',' <ident> ')'
\alt {\bf broadcast} '(' <expr> ',' <ident> ')'
\end{grammar}
\end{minipage}

\begin{grammar}
<ident> and <expr> stand for classical identifiers and arithmetic
expressions on integers (as defined in the C language)
\end{grammar}

Communications are \emph{synchronous}: a process with \texttt{id=orig}
cannot execute the instruction \texttt{send(dest, var)} unless a
process with \texttt{id=dest} is ready to execute the instruction
\texttt{receive(orig, var')}; both processes then execute their
instruction and the value of \texttt{var} (of process \texttt{orig})
is copied to variable \texttt{var'} (of process \texttt{dest}). We
also allow unconditional receptions with \texttt{all\_id} meaning that
a process with \texttt{id=orig} can receive a variable whenever
another process is ready to execute an instruction
$\texttt{send(orig,v)}$. \texttt{broadcast(orig, var)} instructions
cannot be executed unless all processes reach the same
instruction. {\tt create(var)} dynamically creates a new process that
starts its execution at the program entry point. The id of the new
process, which is a fresh id, is stored in {\tt var}, so the current process can communicate
with the newly created process. Other instructions are
asynchronous. Affectations, conditions and loops keep the same meaning
as in the $C$ language.

\subsection{Formal Semantics}

We model our program using an unbounded set $P$ of processes, ordered
by their identifiers ranging from $1$ to $|P|$. 
As usual, the control flow graph (CFG) of the program is a graph where vertices belong to a set $L$ of program points and edges are labelled by a $instr \subseteq L
\times Instr \times L$ where $Instr$ are the instructions defined in
our language. Finally, $V$ represents the set of variables. Their
domain of values is $\mathbb{V} \supseteq \mathbb{N}$. For any expression $expr$
of our language, and any valuation $\rho : V\to \mathbb{V}$, we note
$\eval(expr,\rho) \in \mathbb{V}$ its value.

Our processes share the same code and have distributed
memory: each variable has a local usage in each
process. Thus, a \emph{local state} is defined as $\sigma \in \Sigma =
Id \times L \times (V\to \mathbb{V})$. It records the identifier of
the process, its current location and the value of each
local variable.

A \emph{global state} is defined as a \emph{word} of process local
states: $\sigma_1 \cdot \sigma_2 \cdot ... \cdot \sigma_{n} \in
\Sigma^*$ where $n$ is the number of running processes and $\Sigma^*$
is the free monoid on $\Sigma$. 


The semantics is given as a transition system $\tuple{\Sigma^*, I,
  \tau }$, where $I \in \Sigma^*$ is the set of all possible initial
program states. As the code is shared, every process starts at the
same location $l_0$ and every variable's value is initialised with
0. Therefore, if there are initially $n$ processes, $I = \{\sigma_1
\cdot ... \cdot \sigma_n\}$ where $\forall i \in n, \sigma_i =
\langle i, l_0, (\lambda v~.~0) \rangle$. The transition relation
$\tau \subseteq \Sigma^* \times \Sigma^*$ is defined as:

\begin{itemize}

\item for each local instruction (e.g. assignments, conditionals, and loops)
  $(l,a,l') \in instr$, we have:
  
  $((\sigma_1 \cdot ... \cdot \langle id, l,\rho \rangle \cdot
  ... \cdot \sigma_n ), (\sigma_1 \cdot ... \cdot \langle id, l',\rho'
  \rangle \cdot ... \cdot \sigma_n)) \in \tau\\\text{where } \rho' =
  \sem{a} \rho$ is the classical small-step semantics of action $a$



\item for every pair of {\tt send/receive} instructions of two
  processes :\\ $(l_i, send(id\_to,v_i), l'_i )$ and $(l_j,
  receive(id\_from,v_j), l'_j )$ (or $(l_j,
  receive(\texttt{id\_all},v_j), l'_j )$), we have:

  $ ((\sigma_1 \cdot ... \cdot \langle id_i, l_i,\rho_i \rangle \cdot
  ... \cdot \langle id_j, l_j,\rho_j \rangle \cdot ... \cdot \sigma_n
  ), (\sigma_1 \cdot ... \cdot \langle id_i, l'_i,\rho_i \rangle \cdot
  ... \cdot \langle id_j, l'_j,\rho'_j \rangle \cdot ... \cdot
  \sigma_n )) \in \tau\\ \text{where } \rho'_j = \rho_j [v_j \gets
    \rho_i(v_i)] $ when ($id\_from=\texttt{id\_all}$ or
  $id_i=\eval(id\_from,\rho_j)$) and ($id\_to=\texttt{id\_all}$ or
  $id_j=\eval(id\_to,\rho_i)$ )


\item for each {\tt broadcast} instruction $(l, broadcast(id_x,v),l')$

  $((\langle id_1,l,\rho_1\rangle \cdot ... \cdot \langle id_x,
  l,\rho_x \rangle \cdot ... \cdot \langle id_n,l,\rho_n\rangle),
  (\langle id_1,l',\rho'_1\rangle \cdot ... \cdot \langle id_x,
  l',\rho_x \rangle \cdot ... \cdot \langle id_n,l',\rho'_n\rangle))
  \in \tau\\ \text{where } \forall i\in[1,n], id_i\neq id_x
  \Rightarrow \rho'_i = \rho_i[v\leftarrow \rho_x(v)]$



\item finally, for each {\tt create} instruction $(l, create(v), l')$

  $((\sigma_1 \cdot ... \cdot \langle id, l,\rho \rangle \cdot
  ... \cdot \sigma_n ), (\sigma_1 \cdot ... \cdot \langle id, l',\rho'
  \rangle \cdot ... \cdot \sigma_n \cdot \sigma_{n+1})) \in
  \tau\\\text{where } \rho'= \rho[v \leftarrow n+1] \text{ and
  }\sigma_{n+1} = \langle n+1, l_0, (\lambda v~.~0) \rangle$
\end{itemize}

In the following, we directly consider sets $E \in
\mathcal{P}(\Sigma^*)$ and $\Post_\tau$ defined as:
$$\Post_\tau(E) = \{ w' \in \Sigma^* \, | \,\exists
w \in E \, \wedge \, (w,w') \in \tau \}$$

$\Post_\tau^*$ is the reflexive and transitive closure of
$\Post_\tau$.  Given an initial set of states $I \in \mathcal{P}
(\Sigma^*)$, the \emph{reachability set} $\Post_\tau^*(I)$
contains all states that can be found during an execution of the
program. Assuming we want to check whether the program satisfies a
safety property (expressed as a \emph{bad configuration}) given by a
set of states $B$ that must be avoided, the verification algorithm is
simply to test whether $\Post_{\tau}^*(I) \cap B = \emptyset$; if
true, the program is safe.

Therefore, we would like to define $\Post_{\tau}$ in a more
\emph{operational} way, as a set of rewriting rules that can be
applied to $I$, so we can apply those rules iteratively until we reach
the fixpoint $\Post_{\tau}^*(I)$.

\subsection{Symbolic Rewriting Rules}

Let us consider a local instruction  $(l,a,l') \in instr$; for any set of states $E$:
 $$\Post_{(l,a,l')}(E)=\\\{\sigma_1 \cdot ... \cdot \langle id,
l',\rho' \rangle \cdot ... \cdot \sigma_n ) \; | \; \exists \sigma_1
\cdot ... \cdot \langle id, l,\rho \rangle \cdot ... \cdot \sigma_n)
\in E \; \wedge \; \rho' = \sem{a} \rho \}$$

The effects of $\Post_{(l,al')}$ on $E$ is to rewrite every
word of $E$.  Thus, we would like to express it as a rewriting rule
$G/F$ where $G$ is a symbolic guard matching a set of words and $F$ a
symbolic rewriting function. Since our method uses the framework of
abstract interpretation (see \sref{sec:abstract-semantics}),
\emph{symbolic} means that we consider elements of some lattice to
define the rules. We give the rewriting rule that encodes the
execution of a local instruction $(l,a,l')$:
$$G=\top^* \cdot \langle \_, l, \_ \rangle \cdot \top^* \text{ and }
F= {\tt Id}^* \cdot f \cdot {\tt Id}^* \text{ with } f(X) = \{ \langle
id,l',\sem{a}(\rho) \rangle | \langle id,l,\rho \rangle \in X \} $$
 
The guard matches words composed of any number of processes, then one
process with location $l$, then again any number of processes. The
function ${\tt Id}^*$ means that the processes matched by $\top^*$
will be rewritten as the identity and therefore not modified. $\Lambda
= {\cal P}(\Sigma)$ is the lattice of sets of local states. $f:
\Lambda \to \Lambda$ rewrites a set of local states according to the
semantics of $a$. So every word $w \in E$ that matches the guard will
be rewritten and we will obtain $\Post_{(l,a,l')}(E)$.

We now give the general definition of those rewriting rules and how
to apply them. We remind that the partial order $\sqsubseteq$ can be
extended to $\Lambda^*$ as $u \sqsubseteq v$ if both words have the
same length ($|u|=|v|$) and $\forall i< |u|, \bot \neq u_i \sqsubseteq
v_i$. Note that we do not allow $\bot$ in words: any word that would
contain one or more $\bot$ letters is identified to the smallest element
$\bot_{\Lambda^*}$. Therefore, any word $w\in\Lambda^*$ represents a
set of words of $\Sigma^*$: $\sigma_1\dots \sigma_n \in w$ when
$\{\sigma_1\}\dots \{\sigma_n\} \sqsubseteq w$.

\begin{mydef}
Let $\Lambda$ be a lattice. A rewriting rule over $\Lambda$ is given by two
sequences $G=(g_0)^* \cdot w_1 \cdot (g_1)^* \cdot w_2 \dots w_n \cdot
(g_n)^*$ and $F=f_0 \cdot h_0 \cdot f_1 \cdot h_1 \cdot ... \cdot h_n
\cdot f_{n+1}$ such that:
\begin{itemize}
\vspace*{-1ex}
\item $\forall~1\leq i\leq n, w_i \in \Lambda^*$ and $|w_i|>0$;
\item $\forall~0 \leq i \leq n, g_i \in \Lambda$;\\
We note $N= |w_1| + |w_2| + ... + |w_n|$
\item $\forall~0 \leq i \leq n+1$, $f_i: \Lambda^N\to \Lambda^*$;
\item $\forall~0 \leq i \leq n$, $h_i: \Lambda^{N+1}\to \Lambda$.
\end{itemize}
With this rule, a finite word $w\in \Lambda^*$ is rewritten to $w' \in
\Lambda^*$ if:
\begin{itemize}
\vspace*{-1ex}
\item $w$ can be written as a concatenation $w=u_0 \cdot v_1 \cdot u_1
  \cdot ... \cdot v_n \cdot u_n$ with:
  \begin{itemize}
  \item $\forall~0\leq i\leq n, u_i= \lambda_0 \dots \lambda_{|u_i|}$ and
    $\forall~0 \leq j \leq |u_i|, \lambda_j \sqsubseteq g_i$,
  \item $\forall~1 \leq i \leq n, v_i \sqsubseteq w_i$;
  \end{itemize}
\item $w'=v_0' \cdot u_0' \cdot v_1' \cdot u_1' \cdot ... \cdot v_n'
  \cdot u_n' \cdot v_{n+1}'$ with:
  \begin{itemize}
  \item $\forall~0\leq i\leq n, u_i'= \lambda'_0 \cdot \lambda'_1 \cdot ... \cdot
    \lambda'_{|u_i|}$ and $\forall~0 \leq j \leq |u_i|, \lambda'_j=
    h_i(\lambda_j,v_1,\dots, v_n)$,
  \item $\forall~0 \leq i \leq n+1, v'_i = f_i(v_1,\dots,v_n)$.
  \end{itemize}
\end{itemize}

\end{mydef}

For any $N\in\mathbb{N}$, ${\tt Id}^*: \Lambda^{N+1} \to \Lambda$ is
defined as ${\tt Id}^*(x,y_1,\dots y_N)=x$. Moreover, we denote by
$\langle\_,l,\_\rangle$ the element of $\Lambda = {\cal P}(\Sigma)$
defined as $\{ \langle id,l,\rho \rangle \; | \; \forall id, \rho\} $,
(the symbol '$\_$' matches anything). With these notations, we
can express the transition relation by a set of rewriting rules:

\begin{itemize}
\item For every pair of {\tt send/receive} instructions \\$(l_i,
  {\tt send(id\_to,v_i)}, l'_i )$ and $(l_j, {\tt receive(id\_from,v_j)}, l'_j )$, we
  have the rule:
  $$G=\top^* \cdot \langle \_, l_i, \_ \rangle \cdot
  \top^* \cdot \langle \_, l_j, \_\rangle \cdot \top^* \text{ and }
  F= {\tt Id}^* \cdot f_1 \cdot {\tt Id}^* \cdot f_2 \cdot {\tt
    Id}^*  \text{ with}$$ 
  \begin{itemize}
  \item $f_1(E_1,E_2) = \{\langle id_i,l'_i,\rho_i \rangle \; | \;
    \langle id_i,l_i,\rho_i \rangle \in E_1 \wedge \langle
    id_j,l_j,\rho_j \rangle \in E_2 \wedge id_i = \eval(id\_from,\rho_j) \wedge id_j = \eval(id\_to,\rho_i) \}$
  \item $f_2(E_1,E_2) = \{\langle id_j,l'_j,\rho_j[v_j \gets
    \eval(v_i,\rho_i)] \rangle \; | \; \langle id_i,l_i,\rho_i \rangle \in
    E_1 \wedge \langle id_j,l_j,\rho_j \rangle \in E_2 \wedge id_i = \eval(id\_from,\rho_j) \wedge id_j = \eval(id\_to,\rho_i)\}$
  \end{itemize}
and symmetrically when $\sigma_j$ is located before $\sigma_i$ in the
word of local states. When e.g. $id\_to = {\tt id\_all}$, the
condition $id_j = \eval(id\_to,\rho_i)$ is satisfied for any
$(id_j,\rho_j)$.

\item for each {\tt broadcast} instruction $(l, {\tt broadcast(id\_x,v)},l')$,
  we have the rule:
  $$ G = (\langle \_, l, \_ \rangle)^* \cdot \langle id\_x, l, \_
  \rangle \cdot (\langle \_, l, \_ \rangle)^* \text{ and } F= F_1^*
  \cdot f_1 \cdot F_1^* \text{ with}$$ 
  \begin{itemize}
  \item $F_1^*(E_1, E_2) = \{ \langle id_i, l', \rho_i[v \gets
    \eval(v,\rho_x)] \rangle\; | \; \langle id_i, l, \rho_i \rangle \in E_1
    \wedge \langle id_x, l, \rho_x \rangle \in E_2 \}$
  \item $f_1(E_1) = \{\langle id_x,l',\rho_x \rangle \; | \; \langle
    id_x,l,\rho_x \rangle \in E_1 \}$
  \end{itemize}
The guard $\langle id\_x, l, \_  \rangle $ stands for the set $\{ \langle id_i, l_i, \rho_i  \rangle \, | \, l_i = l \wedge id_i = \eval(id\_x,\rho_i)\}$

  

\item finally, for each {\tt create} instruction $(l, {\tt create(v)}, l')$,
  we have the rule:

  $$ G = \top^* \cdot \{ \_, l, \_\} \cdot \top^* \text{ and } F= Id^*
  \cdot f_1 \cdot Id^* \cdot f_2 \text{ with}$$

  \begin{itemize}
  \item $f_1(E_1) = \{\langle id_i,l',\rho_i \rangle \; | \; \langle
    id_i,l,\rho_i \rangle \in E_1 \}$
  \item $f_2(E_1) = \{ \langle id_n, l_0, (\lambda v~.~0) \rangle\; |
    \; n = {\tt fresh\_id()} \}$
  \end{itemize}

  where {\tt fresh\_id} returns a new unique identifier $n$ where $n =
  |w| + 1$ with $w$ the word of processes.

\end{itemize}

\begin{example}
Let us consider consider our running example depicted
on~\fref{fig:running-example}. Let us assume we have a set of program
states $E=\{\tuple{id=0,l_0,x=0, next=0}; \tuple{id=0,l_9,x=5,
  next=1}.\tuple{id=1,l_8,x=9, next=2}.\tuple{id=2,l_4,x=0,
  next=2}; \tuple{id=1,l_8,x=13, next=2}.\tuple{id=6,l_4,x=0, next=0}.
\}$, i.e. there is either one process in $l_0$, or three process in
$l_9,l_8,l_4$ or two processes in $l_4,l_8$. We consider the symbolic
rewriting rule that results from the communication instructions. Its
guard is $\top^*.\tuple{\_,l_8,\_}.\top^*. \tuple{\_,l_4,\_}.\top^*$ and
its rewriting functions ${\tt Id}^* \cdot f_1 \cdot {\tt Id}^* \cdot
f_2 \cdot {\tt Id}^*$ with 
\begin{itemize}
  \item $f_1(E_1,E_2) = \{\langle id_i,l_9,\rho_i \rangle \; | \;
    \langle id_i,l_8,\rho_i \rangle \in E_1 \wedge \langle
    id_j,l_4,\rho_j \rangle \in E_2 \wedge id_j = \eval(next,\rho_i) \}$
  \item $f_2(E_1,E_2) = \{\langle id_j,l_5,\rho_j[x \gets
    \eval(x,\rho_i)] \rangle \; | \; \langle id_i,l_8,\rho_i \rangle \in
    E_1 \wedge \langle id_j,l_4,\rho_j \rangle \in E_2 \wedge id_j = \eval(next,\rho_i)\}$
  \end{itemize}
then $\Post_\tau(E)= \{ \tuple{id=0,l_9,x=9,
  next=1}.\tuple{id=1,l_9,x=13, next=2}.\tuple{id=2,l_5,x=13, next=2}
\} $, which is the image of the state with three active processes. There is no possible communication when $\tuple{id=1,l_8,x=13, next=2}.\tuple{id=6,l_4,x=0, next=0}$. Even if the locations match the guard, the first process can only send messages to a process with $\texttt{id}= \eval(next,\rho)=2 \neq 6 $. 
\end{example}



\subsubsection{Transducers}

Alternatively, the semantics of local instructions can also be
described by a \emph{lattice transducer}. A finite-state transducer is
a finite-state automaton but instead of only accepting a language, it
also rewrites it. A lattice transducer is similar to a finite-state
transducer; however, it is symbolic, i.e. it accepts inputs (and
produces outputs) belonging to the lattice $\Lambda$, which may be an
infinite set.

\begin{mydef}
A \emph{Lattice Transducer} is a tuple $T = \tuple{\Lambda, Q, Q_0, Q_f, \Delta}$ where:
\begin{itemize}
\item $\Lambda$ is a lattice
\item $Q$ is a finite set of states
\item $Q_0 \subseteq Q$ are the initial states set
\item $Q_f \subseteq Q$ are the final states set
\item $\Delta \subseteq Q \times \Lambda^n \times (\Lambda^n \to \Lambda)^*
  \times Q$  with $n \in \mathds{N}^0$ is a finite set of transitions with guards and rewriting functions
\end{itemize}
\end{mydef}

Let $w=\lambda_1 \cdot ... \cdot \lambda_n \in \Lambda^n$ and $\{q, G, F,
q'\} \in \Delta$ with $G = \gamma_1, ..., \gamma_n$ and ${F = f_1
  , ..., f_m}$. We write $q \xrightarrow{w / w'} q'$ when:
$$
\begin{cases}
  \forall i\in[1,n] \,~\lambda_i \sqsubseteq \gamma_i\\
  
  w' = f_1(\lambda_1, ..., \lambda_n) \cdot ... \cdot
  f_m(\lambda_1, ..., \lambda_n)
\end{cases}
$$

For any word $w\in\Lambda^*$, $T(w)$ is the set of words $w'$ such
that there exists a sequence
$q_0\relt{w_1/w'_1}q_1\relt{w_2/w'_2}\dots \relt{w_n/w_n'}q_f$ with
$q_0\in Q_0$, $q_f\in Q_f$, $w=w_1.w_2\dots w_n$ and
$w'=w'_1.w'_2\dots w'_n$. For any language $L \subseteq \Sigma^*$,
$T(L)= \cup_{w \in L} ~ T(w)$.

We can express the semantics of the local instructions $\langle l_1, a_1,
l_1' \rangle, \langle l_2, a_2,
l_2' \rangle,\dots$ by a transducer as shown in \fref{fig:localtransducer}.

\begin{figure}
  \begin{minipage}[b]{.45\textwidth}
  \centering
  \begin{tikzpicture}[->,>=stealth',shorten >=1pt,auto,node distance=6cm,
      semithick]
    \tikzset{initial text={}}
    \tikzstyle{every state}=[]
    
    \node[initial, state, accepting] (A) {$q_0$};
    
    \path (A) edge [loop above] node { $ \langle \_, l_1, \_ \rangle~/~f
      : (id, l, \rho) \mapsto (id, l_1', \sem{a_1}(\rho))$} (A)

    (A) edge [loop right] node { \dots } (A)
    (A) edge [loop below] node { $ \langle \_, l_2, \_ \rangle~/~f : (id, l,
      \rho) \mapsto (id, l_2', \sem{a_2}(\rho))$} (A);


  \end{tikzpicture}
  \captionof{figure}{\label{fig:localtransducer}Local transitions Transducer}
  \end{minipage}
  \quad 
  \begin{minipage}[b]{.5\textwidth}
    \centering
    \begin{tikzpicture}[->,>=stealth',shorten >=1pt,auto,node distance=2cm,
        semithick]
      \tikzset{initial text={}}
      \tikzstyle{every state}=[]
      
      \node[initial left, accepting, state, inner sep=0pt, minimum size=10pt]   (A) {};

      \path 
      (A) edge [loop above] node {$ \_~/~f : x \mapsto x$} (A)
      
      (A) edge [loop below]  node [right, xshift=5pt, yshift=-10pt, text width=5cm]
      {$\langle \_, l_s, \_ \rangle \cdot \langle \_, l_r, \_
        \rangle~/~$\\$f : (p_s, p_r) \mapsto \langle id_{p_s}, l_s',
        \rho_{p_s} \rangle$\\$f : (p_s, p_r) \mapsto \langle id_{p_r},
        l_r', \rho_{p_r}[x \gets \rho_{p_s}(x)] \rangle$} (A);

    \end{tikzpicture}
    \captionof{figure}{\label{fig:sendright} ``Neighbour''
      communication}

  \end{minipage}
\end{figure}

For the language we presented, the transducer representation is not
fully exploited. Indeed, only single self-looping transitions are
present. Yet, in our example program, we notice that communications
and dynamic creation are done in their ``neighbourhood'': processes
send their $x$ to their right neighbor, receive from the left and
create processes on their right-side. This semantics can be expressed
with our transducer representation. We give on~\fref{fig:sendright} a
transducer encoding a ``neighbour'' version of synchronous
communications as \texttt{send\_right} and \texttt{receive\_left}
primitives. In our illustration, we use the locations $(l_s, l_s')$
and $(l_r, l_r')$ in order to represent pre and post locations of
\texttt{send\_right} and \texttt{receive\_left} instructions. However,
this restriction is not satisfying: we wish to handle point-to-point
communications regardless of process locations in words of
states. Thus we have to limit the transducer to encode only local
transitions.

Therefore, communications are encoded by semantics rules $R$, and
local instructions by a transducer $T$. We note $T_{ext}$ the
transducer extended with semantic rules, i.e. for any language $X
\subseteq \Sigma^*$, $T_{ext}(X) = R(X) \cup T(X) =
\Post_\tau(X)$. For any initial set of states $I \subseteq \{ {\cal
  P}(\Sigma) \}$, we have the reachability set $\Post_\tau^*(I) =
T_{ext}^*(I)$. However, $T_{ext}^*(I)$ cannot be computed in general,
so we need abstractions.

\section{Abstract Semantics}
\label{sec:abstract-semantics}

\subsection{Lattice Automata}
\label{app:lattice-automata}

We give here a look at the lattice automata. The reader may refer to \cite{DBLP:conf/sas/GallJ07} for further details. As said before, the definition of lattice automata requires $\Lambda$ to be atomistic, i.e.:
\begin{itemize}
\item $\At(\Lambda)$ is the set of \emph{atoms}; $\lambda\in\Lambda$
  is an atom if $\forall \lambda'\in\Lambda, \lambda' \sqsubseteq
  \lambda \, \Rightarrow \, \lambda' = \lambda \vee \lambda'= \bot$
\item $\Lambda$ is atomic, i.e. : $\forall \lambda \in \Lambda,\;
  \lambda \neq \bot \Rightarrow \exists \lambda' \in \At(\Lambda),
  \lambda' \sqsubseteq \lambda $
\item any element is equal to to least upper bound of atoms smaller
  than itself: $\forall \lambda \in \Lambda,\; \lambda = \sqcup \{
  \lambda'\,|\,\lambda'\in\At(\Lambda), \lambda' \sqsubseteq
  \lambda\}$
\end{itemize}

The language recognized by lattice automata are on the set of atoms rather than on $\Lambda$ itself. The reason for this is that there may be different edges between the two same nodes. For example, let us consider the lattice of intervals, and let us consider the three automata depicted on~\fref{fig:equivalence-auto}. Intuitively, they represent the same set, but if we define their language as: $w\in \Lambda^*, q_0\reltt{w}q_f$, $\mathcal{L}(A_1)=\{[0,2];[2,4]\}$ while $\mathcal{L}(A_2)=\{[0,3];[3,4]\}$. If we define the language on atoms, both automata recognize the language: $\{[0,0];[1,1];[2,2];[3,3];[4,4]\}$ (assuming we only consider integer bounds). We can also merge transitions and have automaton $A_3$ that recognizes the same language.

\begin{figure}[ht]
  \begin{subfigure}[b]{.33
\textwidth}
    \begin{tikzpicture}[->,>=stealth',shorten >=1pt,auto,node distance=2cm,
        semithick]
      \tikzset{initial text={}}
      \tikzstyle{every state}=[]
      
      \node[initial,state, inner sep=0pt, minimum size=10pt]   (A)                {};
      \node[accepting, state, inner sep=0pt, minimum size=10pt](B) [right of=A]   {};

      \path[every node/.style={font=\sffamily\small}]
      (A) edge [bend left] node [above] {$[0,2]$} (B);
      \path[every node/.style={font=\sffamily\small}]
      (A) edge [bend right] node [below] {$[2,4]$} (B);
    \end{tikzpicture}
    \caption{$A_1$}
  \end{subfigure}  
  \begin{subfigure}[b]{.3\textwidth}
    \begin{tikzpicture}[->,>=stealth',shorten >=1pt,auto,node distance=2cm,
        semithick]
      \tikzset{initial text={}}
      \tikzstyle{every state}=[]
      
      \node[initial,state, inner sep=0pt, minimum size=10pt]   (A)                {};
      \node[accepting, state, inner sep=0pt, minimum size=10pt](B) [right of=A]   {};

      \path[every node/.style={font=\sffamily\small}]
      (A) edge [bend left] node [above] {$[0,3]$} (B);
      \path[every node/.style={font=\sffamily\small}]
      (A) edge [bend right] node [below] {$[3,4]$} (B);
    \end{tikzpicture}
    \caption{$A_2$}
  \end{subfigure}
  \begin{subfigure}[b]{.3\textwidth}
    \begin{tikzpicture}[->,>=stealth',shorten >=1pt,auto,node distance=2cm,
        semithick]
      \tikzset{initial text={}}
      \tikzstyle{every state}=[]
      
      \node[initial,state, inner sep=0pt, minimum size=10pt]   (A)                {};
      \node[accepting, state, inner sep=0pt, minimum size=10pt](B) [right of=A]   {};

     \path[every node/.style={font=\sffamily\small}]
      (A) edge [] node [above] {$[0,4]$} (B);

    \end{tikzpicture}
    \caption{$A_3$}
  \end{subfigure}
  \caption{\label{fig:equivalence-auto}Three equivalent lattice automata}
\end{figure}

Thus the definition of the language allow us to split or merge
transitions as long as the language remain the same. But if the
interval $[0,+\infty]$ may be split in an infinite number of smaller
intervals, how can we ensure that there is only a finite number of
transitions ? We introduce an arbitrary, finite partition $\pi$ of the atoms. $\pi$ may be defined as a function $\pi: K\to \Lambda$, where $K$ is an arbitrary finite set, such that if $k_1 \neq k_2$, $\pi(k_1)\sqcap\pi(k_2)=\bot$ and $\forall a \in \At(\Lambda), \exists k\in K, a\sqsubseteq \pi(k)$. 

We define \emph{Partitioned Lattice Automata} (PLAs) as the automata such that for any transition $(p,\lambda,q)\in \Delta_A, \exists k \in K, \lambda \sqsubseteq \pi(k)$ (i.e. all the atoms smaller than $\lambda$ belong to the same partition class). A PLA is \emph{merged} if $(p,\lambda,q)\in \Delta_A \wedge (p,\lambda',q)\in \Delta_A \Rightarrow \pi^{-1}(\lambda_1) \neq \pi^{-1}(\lambda_2)$, i.e.  there is at most one transition per element of the partition. So merged PLAs have a finite number of transitions. Moreover, we can use this partition to design algorithms similar to the ones for Finite State Automata (such as union, intersection, determinisation and minimisation), with $K$ playing the role of a finite alphabet. Indeed, if $A$ is a merged PLA, we can apply $\pi^{-1}$ to every label of the transitions and obtain a finite-state automata called $\shape(A)$. \emph{Normalised PLAs} are merged PLAs that are also deterministic and minimised. 

If we have $\nabla_{auto}$, a widening operator on finite-state automata, and $\nabla_\Lambda$ a widening operator on $\Lambda$ then we have a widening operator on lattice automata $A_1 \nabla A_2$:
\begin{itemize}
\item if $\shape(A_1)$ and $\shape(A_2)$ are isomorphic, then we apply $\nabla_\Lambda$ on pairs of isomorphic transitions
\item otherwise we compute $\shape(A_1) \nabla_{auto} \shape(A_2)$ and then merge transitions accordingly.
\end{itemize}

Therefore, lattice automata are a convienient way to ``lift'' a numerical domain $\Lambda$ to an abstract domain for languages over $\At(\Lambda)$, and to extend static analysis of sequential programs to concurrent programs. They can also easily handle disjunctive local invariants: $\lambda_1 \vee \lambda_2$ is simply represented by two transitions $(p,\lambda_1,q)$ and $(p,\lambda_2,q)$. Moreover, the whole reachability set is represented by a single automaton, which is both a blessing and a curse: it provides a concise, graphical way to represent the rechability set, but it also means that when computing a fixpoint by iteration (e.g. computing $T^*(A)$), we compute an increasing sequence of (increasingly large) automata $A_{i+1} = A_i \cup T(A_i)$. When applying $T$ to $A_{i+1}$, we have $T(A_{i+1}) = T(A_i) \cup T(T(A_i))$ should avoid to recompute $T(A_i)$ (either using cache or having a way to apply $T$ only to the `increment').

\subsection{Lattice Automata as an abstract domain}

Since $\Sigma$ may be an infinite set, we must have a way to abstract
languages (i.e. subsets of $\Sigma^*$) over an infinite
alphabet. \emph{Lattice Automata}~\cite{DBLP:conf/sas/GallJ07} provide
this kind of abstractions. Lattice Automata are similar to
finite-state automata, but their transitions are labeled by elements
of a lattice. In our case, lattice automata are appropriate because:
\begin{itemize}
\item they provide a finite representation of languages over an infinite alphabet;
\item we can apply symbolic rewriting rules or a transducer to a lattice automaton (see~\sref{sec:abstract:algo});
\item there is a widening operator that ensures the termination of
  the analysis (see~\sref{sec:abstract:analysis}).
\end{itemize}

\begin{mydef}
A lattice automaton is defined by a tuple $A=\tuple{\Lambda,Q,Q_0,Q_f,\delta}$ where:
\begin{itemize}
\item $\Lambda$ is an atomistic lattice\footnote{See~\cite{DBLP:conf/sas/GallJ07} or Appendix~\ref{app:lattice-automata}.}, the order of which is
  denoted by $\sqsubseteq$;
\item $Q$ is a finite set of states;
\item $Q_0 \subseteq Q$ and $Q_f \subseteq Q$ are the sets of
  initial and final states;
\item $\Delta \subseteq Q \times (\Lambda\setminus\{\bot\}) \times
  Q$ is a finite transition relation.\footnote{No transition is
    labeled by $\bot$.}
\end{itemize}
\end{mydef}

This definition requires $\Lambda$ to have a set of atoms $\At(\Lambda)$. 
Abstract lattices like Intervals~\cite{CousotCousot77-1},
Octagons~\cite{Mine:2006:octagon} and Convex
Polyhedra~\cite{CousotHalbwachs78-POPL} are atomistic, so we can
easily find such lattices to do our static analysis.  Note that if
$\Lambda$ is atomistic, $\Lambda^N$ and $\Lambda^*$ are also
atomistic, their atoms belonging to respectively $\At(\Lambda)^N$ and
$\At(\Lambda)^*$. Moreover, for any set $\Sigma$, the lattice
$\mathcal{P}(\Sigma,\subseteq)$ is atomistic and its atoms are the
singletons. In the remainder of this paper, we will assume that any
lattice we consider is atomistic. Finally, in addition to a widening
operator, lattice automata have classic FSA operations ($\cup$, $\cap$,
$\subseteq$, etc.).

The language recognized by a lattice automaton $A$ is noted
$\mathcal{L}(A)$ and is defined by finite words on the alphabet
$\At(\Lambda)$. $w \in \mathcal{L}(A)$ if $w=\lambda_1 \dots
\lambda_n\in\At(\Lambda)^*$ and there is a sequence of states and
transitions
$q_0\relt{\lambda_0}q_1\relt{\lambda_2}\dots\relt{\lambda_n}q_n$ with
$q_0\in Q_0$ and $q_n\in Q_f$.

The reason why we define the language recognized by a lattice
automaton as sequence of atoms are discussed in
\cite{DBLP:conf/sas/GallJ07}; in a nutshell, this definition implies
that two lattice automata that have the same concretisation recognize
the same language. Moreover, by introducing a finite partition of the
atoms, we can define determinisation and minimisation algorithms
similar to the ones for finite-state automata, as well as a canonical
form (\emph{normalized} lattice automata). 

\paragraph{Abstractions and Concretisations} Assuming there is a Galois connection between $\mathcal{P}(\Sigma)$ and $\Lambda$ we can extend the  concretisation function $\gamma: \Lambda \to \mathcal{P}(\Sigma)$, we can extend it to $\gamma: \Lambda^* \to \mathcal{P}(\Sigma^*)$; if $w=\lambda_1\dots \lambda_n \in \Lambda^*$, $\gamma(w) = \{ \sigma_1\dots\sigma_n | \forall i=1..n,  \sigma_i \in \gamma(\lambda_i)\}$ and for any language $L$, $\gamma(L) = \cup_{w \in L} \gamma(w)$. Thus, the concretisation of a lattice automaton $A$ is $\gamma(\mathcal{L}(A))$, which can be computed by applying $\gamma$ to all  of $A$.  Lattice automata are not a complete lattice; the abstraction function is defined as: if $L$ is regular (i.e. it can be represented by a lattice automaton with labels in $\mathcal{P}(\Sigma)$) $\alpha(L)$ we can apply $\alpha$ to each edge; otherwise $\alpha(L)=\top$. The latter case does not happen in practice, since the initial set of states $I$ is regular, and since we only check regular properties.
We now present algorithms to apply a symbolic rewriting rule or a
lattice transducer to a lattice automaton.


\subsection{Algorithms}
\label{sec:abstract:algo}

\subsubsection{Application of a Rule}
To apply a symbolic rewriting rule to the language recognized by a
lattice automaton, we must first identify the subset of words that
match the guard $(g_0)^* \cdot w_1 \cdot (g_1)^* \cdot w_2 \dots w_n
\cdot (g_n)^*$. In this guard, it's easier to look first for sequences
in the automaton that match $w_1,w_2,\dots,w_n$. In automaton $A$ a
sequence that matches e.g. $w_1$ begins from state $q_b^1$ and ends in
state $q_e^1$. Then, we identify the sub-automaton that could match
$(g_0)^*$, i.e. all the states that are reachable from an initial
state $q_0$ and correachable from $q_b^1$ by considering only
transitions labeled by elements $\lambda$ such that $g_0 \sqcap
\lambda \neq \bot$. Once each part is identified, we can apply the
rewriting function to each part and then we get a new automaton
$A'$. Since this pattern matching is non deterministic, we have to
consider all possible matching sequences. The result of the algorithm
is the union of every automaton $A'$ constructed in this way.

We introduce some notations before writing the algorithm. Let
$w=\lambda_1\dots \lambda_n \in \Lambda^n$ and let $A$ be a lattice
automaton. We denote by $\matches(w,A)$ the set of matching
sequences:\begin{align*}
  \matches(w,A) = \{ &(q_b,v_1\dots v_n ,q_e)~|~\exists q_0
  \relt{\lambda_1'} q_1 \relt{\lambda_2'} \dots \relt{\lambda_n'} q_n
  \in A,& \\&q_0=q_b \wedge q_n = q_e \wedge \forall i= 1..n, v_i =
  \lambda_i \sqcap \lambda'_i \neq \bot \} \} &
\end{align*}
Let $(q_b,q_e)$ be a
pair of states of a lattice automaton
$A=\tuple{\Lambda,Q,Q_0,Q_f,\delta}$ and let $\lambda\in\Lambda$. We
denote by $A_{q_b \to q_e}$ the sub-automaton $A_{q_b \to
  q_e}=\tuple{\Lambda,Q,\{q_b\},\{q_e\},\delta}$. For a lattice
automaton $A=\tuple{\Lambda,Q,Q_0,Q_f,\delta}$ and a function $f:
\Lambda \to \Lambda$, we denote by $\map(f,A)$ the automaton
$\map(f,A)= \tuple{\Lambda,Q,Q_0,Q_f,f(\delta)}$ where $f(\delta) = \{
(q,f(\lambda),q') | (q,\lambda,q') \in \delta \wedge f(\lambda) \neq
\bot\}$.

With those notations, we give an algorithm to apply a rewriting rule
on a lattice automaton:

\begin{lstlisting}[style=pseudo,mathescape=true]
ApplyRule($G = (g_0)^* \cdot w_1 \cdot (g_1)^* \dots w_n \cdot
(g_n)^*$ and $F =f_0 \cdot h_0 \cdot f_1 \cdot ... \cdot h_n
\cdot f_{n+1},~ A$):
Result $:= \emptyset$ 
For all matching sequences 
$(q_b^1,v^1, q_e^1) \in \matches(w_1,A)$, $\dots$, $(q_b^n,v^n, q_e^n) \in \matches(w_n,A)$, 
for each initial state $q_0 \in Q_0^A$ and for each final state $q_f \in Q_f^A$
$($ Let $A_0=\map(x\mapsto g_0 \sqcap x,A)_{q_0 \to q_b^1}$,
  $A_1=\map(x\mapsto g_1 \sqcap x,A)_{q_e^1 \to q_b^2}$, $\dots$, $A_n=\map(x\mapsto g_n \sqcap x,A)_{q_e^n \to q_f}$. 
  For $i=0~..~n$:
    let $A'_i=\map(x \mapsto h_i(x,v^1,\dots,v^n),A_i)$. 
  For $i=0~..~n+1$:
    let $w'_i=f_i(v^1,\dots,v^n)$.
  Let $q_{-1}$ and $q_{n+1}$ be two fresh states (not appearing in any $A'_i$).  
  Let $\delta^{seq} = \{ (q_{-1},w'_0,q_0)
  (q_b^1,w'_1,q_e^1) (q_b^1,w'_1,q_e^1) \dots (q_b^n,w'_n,q_e^n)
  (q_b^n,w'_{n+1},q_{n+1})\}$.
  Let $A'= \tuple{\Lambda,Q \cup \{q_{-1},q_{n+1}\},\{q_{-1}\},\{q_{n+1}\},\delta^{A'}}$ with
  $\delta^{A'}= \delta^{seq} \cup \delta^{A'_0} \cup \dots \cup
  \delta^{A'_n}$ 

  Result $:=$ Result $\cup~A'~)$
return Result
\end{lstlisting}

\begin{theorem}
  \label{theorem:ruleinclusion}
Let $R=(g_0)^* \cdot w_1 \cdot (g_1)^* \cdot w_2 \dots w_n \cdot
(g_n)^*\;/\;f_0 \cdot h_0 \cdot f_1 \cdot h_1 \cdot ... \cdot h_n
\cdot f_{n+1}$ be a rewriting rule and $A$ a lattice automaton. If $R(A)= ApplyRule(R,A)$, then we have: $R(\mathcal{L}(A)) \subseteq \mathcal{L}(R(A))$. 

\end{theorem}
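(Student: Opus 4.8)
The plan is to show that every word produced by applying the rewriting rule $R$ semantically to $\mathcal{L}(A)$ is recognized by the automaton $ApplyRule(R,A)$. So let $w' \in R(\mathcal{L}(A))$; by definition of how a rewriting rule acts (Definition~1), there is a word $w \in \mathcal{L}(A)$ over $\At(\Lambda)$ and a decomposition $w = u_0 \cdot v_1 \cdot u_1 \cdots v_n \cdot u_n$ with each $u_i = \lambda_0^{(i)} \cdots \lambda_{|u_i|}^{(i)}$ satisfying $\lambda_j^{(i)} \sqsubseteq g_i$ and each $v_i \sqsubseteq w_i$, and $w'$ has the corresponding decomposition $v_0' \cdot u_0' \cdot v_1' \cdots u_n' \cdot v_{n+1}'$ with $u_i'$ built letterwise by $h_i(\cdot, v_1, \dots, v_n)$ and $v_i' = f_i(v_1, \dots, v_n)$. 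First I would extract, from the accepting run of $w$ in $A$, the boundary states: since $w \in \mathcal{L}(A)$ there is a run $q_0 \relt{\lambda_0} q_1 \relt{} \cdots \relt{} q_m \in Q_f$, and the positions where $v_1, \dots, v_n$ occur pick out states $q_b^i, q_e^i$ such that $v_i$ labels a path from $q_b^i$ to $q_e^i$ in $A$, with each letter of $v_i$ below the corresponding automaton label (so $(q_b^i, v_i, q_e^i) \in \matches(w_i, A)$), and with the $u_i$-segments lying on paths $q_0 \to q_b^1$, $q_e^1 \to q_b^2$, $\dots$, $q_e^n \to q_f$ using only transitions whose labels meet $g_i$ nontrivially.

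Next I would argue that, for this particular choice of matching sequences $(q_b^i, v_i, q_e^i)$, initial state $q_0$ and final state $q_f$ — one of the iterations of the outer loop of $ApplyRule$ — the automaton $A'$ constructed in that iteration accepts $w'$. The $u_i$-segment of $w$ runs through $A_i = \map(x \mapsto g_i \sqcap x, A)_{q_{\mathrm{start}} \to q_{\mathrm{end}}}$ (here using $\lambda_j^{(i)} \sqsubseteq g_i$, so $g_i \sqcap \lambda_j^{(i)} = \lambda_j^{(i)}$, meaning the original path survives in $A_i$), and then in $A_i' = \map(x \mapsto h_i(x, v^1, \dots, v^n), A_i)$ the same path carries the relabeled letters $h_i(\lambda_j^{(i)}, v_1, \dots, v_n)$, which by Definition~1 are exactly the letters of $u_i'$. (A small point: the $v^i$ used by the algorithm is $\lambda_i \sqcap \lambda_i'$, which coincides with the $v_i$ of the semantic rewriting applied to this $w$, since $w$'s letter at that position is $\lambda_i \sqsubseteq \lambda_i'$.) Meanwhile the $\delta^{seq}$ transitions supply a path $q_{-1} \relt{w_0'} q_0$, then after the $A_0'$-path a transition $q_b^1 \relt{w_1'} q_e^1$, etc., finishing with $q_b^n \relt{w_{n+1}'} q_{n+1}$; and $w_i' = f_i(v^1, \dots, v^n)$ matches the $v_i'$ of the semantic rewriting. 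Concatenating: $q_{-1} \relt{w_0'} q_0 \reltt{u_0'} q_b^1 \relt{w_1'} q_e^1 \reltt{u_1'} \cdots \reltt{u_n'} q_b^n \relt{w_{n+1}'} q_{n+1}$ is an accepting run of $A'$ on $v_0' u_0' v_1' u_1' \cdots u_n' v_{n+1}' = w'$. Since $ApplyRule$ returns the union over all loop iterations, $w' \in \mathcal{L}(ApplyRule(R,A)) = \mathcal{L}(R(A))$.

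One subtlety to handle carefully is that the algorithm's output words $w_i'$ are elements of $\Lambda^*$ (the $f_i$ map into $\Lambda^*$, not $\Lambda$), so a single $\delta^{seq}$ "transition" labeled $w_i'$ is really a short chain of transitions, and likewise $h_i$ produces single lattice elements that may not be atoms; to talk about $w' \in \mathcal{L}(A')$ I must pass to the atom-level language, i.e. show that every atom-word below $w'$ (in the product order on $\Lambda^*$) is accepted, which follows because atoms below a concatenation decompose as concatenations of atoms below each factor, and a transition labeled $\mu$ accepts any atom $\sqsubseteq \mu$. I expect this bookkeeping — reconciling the word-valued outputs of $f_i$, the atom-based definition of $\mathcal{L}$, and the $\sqsubseteq$-slack between a word of $\mathcal{L}(A)$ and the guard labels — to be the main obstacle; the structural "the run decomposes and the pieces line up with the algorithm's pieces" argument is otherwise a direct unfolding of Definition~1 and the definitions of $\matches$, $\map$, and $A_{q_b \to q_e}$. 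Note the statement is only the soundness ($\subseteq$) direction, so I do not need the reverse inclusion, which in general can fail precisely because of this $\sqsubseteq$-slack and the over-approximation introduced by splitting/merging transitions.
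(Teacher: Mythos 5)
Your proposal follows essentially the same route as the paper's proof: decompose the matched word $w$ along the guard, read off the matching sequences $(q_b^i, v^i, q_e^i)$ and the boundary states from an accepting run of $w$ in $A$, and check that the automaton $A'$ built for that particular choice accepts the rewritten word; your version merely spells out the run-level and atom-level bookkeeping that the paper leaves implicit. One small caveat: the algorithm's $v^i = \lambda_i \sqcap \lambda_i'$ does not in general \emph{coincide} with the semantic $v_i$ (a word of atoms) as your parenthetical claims --- you only have $v_i \sqsubseteq v^i$, so concluding that the relabelled transitions and the $\delta^{seq}$ labels dominate the letters of $w'$ tacitly uses monotonicity of the $f_i$ and $h_i$, a silent assumption the paper's own proof also makes.
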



\begin{proof}
Let $R= (g_0)^* \cdot
w_1 \cdot (g_1)^* \cdot w_2 \dots w_n \cdot (g_n)^* / f_0 \cdot h_0 \cdot f_1 \cdot ... \cdot h_n \cdot f_{n+1}$ 


Let $w \in \mathcal{L}(A)$. If $w$ matches the guard $(g_0)^* \cdot
w_1 \cdot (g_1)^* \cdot w_2 \dots w_n \cdot (g_n)^*$, it means we can
decompose it as $w=u_0.v_1.u_1.v_2.u_2 \dots v_n.u_{n}$ such that:
\begin{itemize}
\item for $i=1..n$, $v_i \sqsubseteq w_i$
\item for $i=0..n$, each letter of $u_i$ is smaller than $g_i$
\end{itemize}
Since $w \in \mathcal{L}(A)$, we consider a path
$q_0\reltt{u_0}q_b^1\reltt{v_1} q_e^1\dots \reltt{v_n}q_e^n\reltt{u_n}
q_f$ in $A$, i.e. there are matching sequences $(q_b^1,v^1, q_e^1) \in
\matches(w_1,A)$, $(q_b^2,v^2, q_e^2) \in \matches(w_2,A)$, $\dots$,
$(q_b^n,v^n, q_e^n) \in \matches(w_n,A)$, such that $\forall i, v_i
\sqsubseteq v^i$. In algorithm  ${\tt ApplyRule}$, these matching sequences generate an automaton $A'$. By applying the rewriting functions $f_0 \cdot h_0 \cdot f_1 \cdot h_1 \cdot ... \cdot h_n
\cdot f_{n+1}$ to $w$, we obtain $R(w)$ which is recognized by $A'$. So $R(w)\in \mathcal{L}(A') \subseteq \mathcal{L}(R(A))$.

\qed
\end{proof}

However, we do
not have $R(\mathcal{L}(A)) \supseteq \mathcal{L}(R(A))$ as shown in
the following example:
\begin{example}
Let $A$ the lattice automaton that recognizes the language\\
$L = \{[0,0],[1,1]\}$ (i.e. there is only one process, with one integer
variable which values is either $0$ or $1$) and let $R$ be the
rewriting rule $\top/f_1.f_2$ where $f_1: x\mapsto 2x $ and $f_2: x
\mapsto 4x$ then $R(\mathcal{L}(A))=
\{[0,0].[0,0]\,,\,[2,2].[4,4]\}$. But $R(A)$ is a lattice automaton
that can recognize 4 words:\\ $\mathcal{L}(R(A))=
\{[0,0].[0,0]\,,\,[2,2].[4,4]\,,\,[0,0].[4,4]\,,\,[2,2].[0,0]\}$.
\end{example}

\subsubsection{Application of a transducer}


The following algorithm computes the application of a symbolic
transducer $T=\tuple{\Lambda,Q^T,Q_0^T,Q_f^T,\Delta^T}$ to a (language
recognized by a) lattice automaton
$A=\tuple{\Lambda,Q^A,Q_0^A,Q_f^A,\Delta^A}$. The idea is to consider
the cartesian product $Q^T \times Q^A$ and to create transitions
whenever it is allowed by the transducer and the automaton.

\begin{lstlisting}[style=pseudo,mathescape=true]
ApplyTransducer(T,A):

$\Delta^{T(A)}= \emptyset$
$\forall (p,q) \in Q^{T} \times Q^{A}$ 
$\forall p' \in Q^{T}, \forall (p,Q,F,p') \in \Delta^T$ with $G \in \Lambda^n$ and $F: \Lambda^n\to\Lambda^*$ 
$\forall q' \in Q^{A}$ such that there is a sequence of transitions
  $q \reltt{w} q'$ in $A$ with $w\in\Lambda^n$
If $G \sqcap w \neq \bot$ then $\Delta^{T(A)}:= \Delta^{T(A)}\cup
\left\{( (p,q), F(G \sqcap w), (p',q') )\right\}$
$\forall(p,q) \in Q^{T(A)}_0$ if $p \in Q^T_0$ and $q \in Q^A_0$
$\forall(p,q) \in Q^{T(A)}_f$ if $p \in Q^T_f$ and $q \in Q^A_f$
\end{lstlisting}

Note that $\Delta^{T(A)}= \Delta^{T(A)}\cup \left\{( (p,q), F(G \sqcap
w), (p',q') )\right\}$ means that we add not one but a sequence of
transitions (introducing fresh new states). So the set of states of
the resulting automata $T(A)$ is the union of $Q^T \times Q^A$ and all
the fresh states we added. \fref{fig:applicationexample} gives an
illustration of an application of a transducer mapping the semantics
of the single local instruction of our program $(l_7, \texttt{[x := x
    + 4]}, l_8)$ on single letter program state set (i.e. only one
process). Please note that, for the sake of clarity, we use line
numbers as locations. $l_7$ is the location just before the evaluation
of the assignment. $l_8$ is, thus, after the evaluation and $l_9$
represents the last location symbolising the end of a process
execution.  Our transducer application algorithm complexity is
$O(\abs{Q_A} \cdot \abs{Q_T} \cdot \abs{\Delta_T} \cdot \abs{\pi}^N)$
where $\pi$ is the lattice automata's partition (its size depends on
the locations of the program) and $N$ is the maximum length of all
transition guards (here $N=1$).

\begin{figure}[ht]

  \begin{subfigure}[b]{.4\textwidth}
    \begin{tikzpicture}[->,>=stealth',shorten >=1pt,auto,node distance=4cm,
        semithick]
      \tikzset{initial text={}}
      \tikzstyle{every state}=[]
      \node[initial,state,inner sep=0pt, minimum size=10pt, accepting]   (A)  {};
      \path (A) edge [loop above] node {$\langle \_, l_7, \_ \rangle
        ~/~f : (id,l,\rho) \mapsto id, l_8,\rho[x \gets \rho(x) + 4]$} (B);
    \end{tikzpicture}
    \caption{$T$}
  \end{subfigure} \qquad
  \begin{subfigure}[b]{.25\textwidth}
    \begin{tikzpicture}[->,>=stealth',shorten >=1pt,auto,node distance=2cm,
        semithick]
      \tikzset{initial text={}}
      \tikzstyle{every state}=[]
      
      \node[initial,state, inner sep=0pt, minimum size=10pt]   (A)                {};
      \node[accepting, state, inner sep=0pt, minimum size=10pt](B) [right of=A]   {};

      \path 
      (A) edge node [yshift=5pt] {$(id_0, l_7, \{ x \mapsto 1\})$} (B);
    \end{tikzpicture}
    \caption{A}
  \end{subfigure}  
  \begin{subfigure}[b]{.25\textwidth}
    \begin{tikzpicture}[->,>=stealth',shorten >=1pt,auto,node distance=2cm,
        semithick]
      \tikzset{initial text={}}
      \tikzstyle{every state}=[]
      
      \node[initial,state, inner sep=0pt, minimum size=10pt]   (A)                {};
      \node[accepting, state, inner sep=0pt, minimum size=10pt](B) [right of=A]   {};

      \path 
      (A) edge node [yshift=5pt] {$(id_0, l_8, \{ x \mapsto 5\})$} (B);
    \end{tikzpicture}
    \caption{$T(A)$}
  \end{subfigure}
  \caption{\label{fig:applicationexample}Transducer application}
  
\end{figure}

\begin{theorem}
  \label{theorem:transducerinclusion}
Let $T$ be a symbolic transducer and $A$ a lattice automaton. We have:
$T(\mathcal{L}(A)) \subseteq \mathcal{L}(T(A))$.  
\end{theorem}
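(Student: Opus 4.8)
The plan is to mirror the proof of Theorem~\ref{theorem:ruleinclusion}, but with the bookkeeping driven by the transducer's run structure rather than by the guard decomposition. First I would take an arbitrary word $w'\in T(\mathcal{L}(A))$, so by definition of $T(L)$ there is some $w\in\mathcal{L}(A)$ with $w'\in T(w)$. Since $w\in\mathcal{L}(A)$, $w$ is a word of atoms $w=\lambda_1\cdots\lambda_k\in\At(\Lambda)^*$ recognized along a path $q_0^A\relt{\lambda_1}q_1^A\relt{}\cdots\relt{\lambda_k}q_k^A$ with $q_0^A\in Q_0^A$, $q_k^A\in Q_f^A$. Since $w'\in T(w)$, there is a run $p_0\relt{w_1/w'_1}p_1\relt{w_2/w'_2}\cdots\relt{w_m/w'_m}p_m$ of the transducer with $p_0\in Q_0^T$, $p_m\in Q_f^T$, $w=w_1\cdots w_m$, $w'=w'_1\cdots w'_m$, and for each $j$ a transition $(p_{j-1},G_j,F_j,p_j)\in\Delta^T$ with $w_j\sqsubseteq G_j$ (componentwise) and $w'_j=F_j(w_j)$.

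Next I would align the two decompositions: the factorization $w=w_1\cdots w_m$ induces a partition of the positions $1,\dots,k$ into consecutive blocks, so each $w_j$ is read by $A$ along a contiguous sub-path from some state $r_{j-1}$ to $r_j$ (with $r_0=q_0^A$, $r_m=q_k^A$), each edge of that sub-path carrying an atom $\lambda\sqsubseteq$ the corresponding component of $w_j$. The key monotonicity observation is that since each atom on this sub-path is below the matching component of $G_j$ as well, we have $w_j\sqsubseteq G_j\sqcap(\text{the word of atoms}) $, and in fact the word of atoms read by $A$ equals $w_j$ itself (atoms are the recognized letters), so $G_j\sqcap w_j = w_j$ and $F_j(G_j\sqcap w_j)=w'_j$. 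Therefore in the construction {\tt ApplyTransducer}, starting from the product state $(p_0,r_0)$ — which is an initial state of $T(A)$ since $p_0\in Q_0^T$ and $r_0\in Q_0^A$ — each transducer transition $(p_{j-1},G_j,F_j,p_j)$ together with the $A$-sub-path $r_{j-1}\reltt{w_j}r_j$ contributes to $\Delta^{T(A)}$ a sequence of transitions (through fresh states) spelling out $F_j(G_j\sqcap w_j)=w'_j$ from $(p_{j-1},r_{j-1})$ to $(p_j,r_j)$. Concatenating these for $j=1,\dots,m$ yields a path in $T(A)$ from $(p_0,r_0)$ to $(p_m,r_m)$ labeled by $w'_1\cdots w'_m=w'$, and $(p_m,r_m)$ is a final state of $T(A)$ since $p_m\in Q_f^T$ and $r_m\in Q_f^A$. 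Hence $w'\in\mathcal{L}(T(A))$, and since $w'$ was arbitrary, $T(\mathcal{L}(A))\subseteq\mathcal{L}(T(A))$.

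The main obstacle I expect is the careful handling of two technicalities in the alignment step. First, a transducer transition reads $w_j\in\Lambda^n$ for some $n$ that is fixed by that transition, and $A$ recognizes words of atoms; one must check that "there is a sequence of transitions $r_{j-1}\reltt{w_j}r_j$ in $A$" in the sense used by {\tt ApplyTransducer} is genuinely witnessed here, i.e. that the atom-labeled sub-path of $A$ has exactly the right length $n$ and that its componentwise meet with $G_j$ is nonzero — which follows because each atom is $\sqsubseteq$ the corresponding component of $w_j\sqsubseteq G_j$ and atoms are nonzero. Second, one must be careful that the labels actually produced in $T(A)$ are the words $w'_j = F_j(w_j)$ and not $F_j$ of some strictly larger guard-meet, which is why the observation $G_j\sqcap w_j = w_j$ (for the atom-word $w_j$ read by $A$) is the crux. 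Everything else is routine concatenation of paths across the cartesian-product construction, exactly parallel to the argument given for Theorem~\ref{theorem:ruleinclusion}.
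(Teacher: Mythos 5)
Your overall strategy is the same as the paper's: decompose the accepting run of $A$ on $w$ into the blocks $w_1\cdots w_m$ induced by the transducer run, and stitch the corresponding product-state transitions together in $T(A)$. However, the step you single out as the crux is wrong. You assert that the sub-path of $A$ reading $w_j$ carries the atoms of $w_j$ on its edges, hence $G_j\sqcap w_j=w_j$ and the label produced in $T(A)$ is exactly $w'_j=F_j(w_j)$. But the edges of a lattice automaton are labeled by arbitrary non-$\bot$ elements of $\Lambda$, and $w\in\mathcal{L}(A)$ only means that each atom of $w$ is $\sqsubseteq$ the label of the corresponding edge; the labels $\lambda_{j,1},\dots,\lambda_{j,m_j}$ of that sub-path may be strictly larger than the atoms of $w_j$ (this is precisely the point of the $A_1,A_2,A_3$ example of \fref{fig:equivalence-auto}). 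Since ${\tt ApplyTransducer}$ meets $G_j$ with those edge labels, the transition it adds to $T(A)$ is labeled $F_j(G_j\sqcap\lambda_{j,1}\cdots\lambda_{j,m_j})$, which is in general strictly above $w'_j$, not equal to it.

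The conclusion survives, but for a different reason than the one you give: membership of $w'$ in $\mathcal{L}(T(A))$ only requires the atom-word $w'$ to lie below the labels of some accepting path, so it suffices to establish $w'_j=F_j(w_j)\sqsubseteq F_j(G_j\sqcap\lambda_{j,1}\cdots\lambda_{j,m_j})$. This is the inequality the paper's proof uses, and it relies on the (implicit) assumption that the rewriting functions $F_j$ are monotone --- an assumption your equality-based argument appears to sidestep but actually cannot. Replace your ``crux'' by this monotonicity step and your proof coincides with the paper's.
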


\begin{proof}
  Let $T=\tuple{\Lambda,Q^T,Q_0^T,Q_f^T,\Delta^T}$ and $A=\tuple{\Lambda,Q^A,Q_0^A,Q_f^A,\Delta^A}$.
  Let $w \in \mathcal{L}(A)$; we must prove that $T(w) \subseteq \mathcal{L}(T(A))$.
  By definition, $w'\in T(w)$ if $p_0\reltt{w_1/w'_1}p_1\reltt{w_2/w'2}\dots \reltt{w_n/w'_n}p_n$
  with $p_0\in Q_0^T,p_n\in Q_f^T$, $w=w_1.w_2\dots w_n$, $w'=w'_1.w'_2\dots w'_n$ 
  and $\forall i=1..n, \exists (p_{i-1},G_i,F_i,p_i)\in\Delta^T$ with
  $w_i \sqsubseteq G_i$ and $w'_i=F(w_i)$.

  Since $w\in\mathcal{L}(A)$, it means $\exists q_0...q_n \in Q^A$
  with $q_0 \in Q^A_0, q_n \in Q^A_f$ such that, $q_0 \reltt{w_1}
  q_1\reltt{w_2} \dots \reltt{w_n} q_n$. In other words $\forall
  i=1..n$ there is a sequence of transitions:
  $q_{i-1}\relt{\lambda_{i,1}}\dots\relt{\lambda_{i,m_i}}q_i$ with
  $w_i \sqsubseteq \lambda{i,1}\dots\lambda_{i,m_i}$.  So $w_i
  \subseteq G_i \sqcap \lambda{i,1}\dots\lambda_{i,m_i}$ and
  $F(w_i)=w'_i\sqsubseteq F(G_i \sqcap
  \lambda{i,1}\dots\lambda_{i,m_i})$. By definition of $T(A)$,
  $(p_0,q_0)\reltt{w'_1}\dots\reltt{w'_n}(p_n,q_n)$ is
  an accepting run of $T(A)$, thus $w'\in\mathcal{L}(T(A))$.

The same principle applies for R(A).

\qed \end{proof}

We note $T_{ext}(A) = R(A) \cup T(A)$ the automaton resulting of the
union of ${\tt ApplyRule(R,A)}$ and ${\tt ApplyTransducer(T,A)}$.

\subsection{Fixpoint computation}
\label{sec:abstract:analysis}

As we said before, the reachability set is defined as the fixpoint
$\Post_\tau^*(I)$; If we can compute $T_{ext}^*(I)$ in the
abstract domain of lattice automata, we will get an over-approximation
interpretation of this reachability set. However, there are infinitely
increasing sequences in this abstract domain, so we need to apply a
\emph{widening operator} to ensure the termination of the
computation. There exists a widening operator which ``lifts'' a
widening operator $\nabla_\Lambda$ defined for $\Lambda$ to the
abstract domain of lattice automata: $A_1 \nabla A_2$ applies
$\nabla_\Lambda$ to each transition of $A_1$ and $A_2$ when the two
automata have the same ``shape''; otherwise, it merges some states of
$A_1 \cup A_2$ to obtain an over-approximation
(see~\cite{DBLP:conf/sas/GallJ07}).

The generic fixpoint algorithm is thus to apply the widening operator $\nabla$ at each step until we reach a post-fixpoint, i.e. we iterate the operator
  \begin{equation*}
    T_\nabla(S) = 
    \begin{cases}
      S & \text{if } T_{ext}(S) \sqsubseteq S\\
      S \nabla (S \cup T_{ext}(S)) &\text{otherwise }
    \end{cases}
  \end{equation*}

This computation gives a post-fixpoint $T^\infty \supseteq
T_{ext}(I)$. In practice, this method may yield very
imprecise upper bounds. Since $\Lambda$ contains information about the
location of each process, we can improve the precision by applying
$\nabla_\Lambda$ only to locations corresponding to an entry point of
a loop. It is known
~\cite{Bourdoncle93efficientchaotic} that we only need
widening to break dependency cycles
and~\cite{Bourdoncle93efficientchaotic} gives an extensive study on
the choice of widening application locations.

Once we get an over-approximation of the reachability set, we can check
any safety property expressed as a set of bad states represented by a
lattice automaton $B$; if $T^\infty \cap B = \emptyset$, then the
system is safe. If not, the property may be false, thus we raise an
alarm.

On our example (\fref{fig:running-example}), applying our method using
a precise relational numerical abstract domain (e.g. polyhedra) gives
us a reachability set. We can prove the safety property given
on~\fref{fig:safetyproperty} by using the following
invariant present in the reachability set:

\begin{center}
  \scriptsize
  \begin{tikzpicture}[->,>=stealth',shorten >=1pt,auto,node distance=3.5cm,
    semithick] 
  \tikzset{initial text={}} 
  \tikzstyle{every state}=[]
  
  \node[initial, state, inner sep=0pt, minimum size=10pt]   (A) {};
  \node[state, inner sep=0pt, minimum size=10pt] (B) [right of=A]   {};
  \node[state, inner sep=0pt, minimum size=10pt] (C) [right of=B]   {};
  \node[accepting, state, inner sep=0pt, minimum size=10pt] (D) [right of=C]   {};

  \path 
  (A) edge node [xshift=-5pt, yshift=5pt] {$id = 1 \times l_9 \times \{ x = 5 \}$} (B)
  
  (B) edge node [xshift=-5pt, yshift=5pt] {$id > 2  \times l_9 \times \{ x = 5 + 4 * id \}$} (C)

  (C) edge [loop below] node {$id > 2  \times l_9 \times \{ x = 5 + 4 * id \}$} (C)

  (C) edge node [xshift=15pt, yshift=5pt] {$id > 5 \times l_6 \times \{
    x = 5 + 4 * (id - 1) \}$} (D);
\end{tikzpicture}
\end{center}

\section{Verification of MPI programs}
\label{sec:mpi}

In order to validate our approach, we applied our method to the
Message Passing Interface (MPI). MPI is a specification of a message
passing model. Many implementations have been developed and it is
widely used in parallel computing for designing distributed
programs. Every process has its own memory and \emph{shares a common
  code}. A notion of \emph{rank} (acting as id) is present in order to
differentiate the processes. This paradigm makes a good candidate to
map our model onto.

We developed a prototype\footnote{The prototype can be found at:
  \url{https://www-apr.lip6.fr/~botbol/mpai}} that targets a MPI
subset for the C language. It currently supports \emph{synchronous MPI
  communications}, integer and floating point values as well as a good
subset of the C language. Currently, we do not support dynamic process
creation in MPI.
This prototype has been implemented as a \emph{Frama-C}
plug-in.
This plug-in uses a lattice transducer library we developed on top of
an existing lattice automata implementation. Our abstract domains are
given by the \texttt{Apron} library. This prototype has been written
in OCaml. The current size of the plug-in is around 10.000 lines of code and is
still a work in progress. Unfortunately, due to licensing issues, its
source code is not available yet.



To illustrate our method, we refer, throughout this section, to a
small MPI program (\fref{fig:simpleprog}). This program runs $N$
processes that each computes $1/2^{(rank+1)}$. Then, the root
(i.e. rank = 0) process collects each local result and sums them by a
call to the \texttt{MPI_Reduce} primitive.

\begin{figure}[ht]
  \begin{lstlisting}[xleftmargin=.1\textwidth, language=C,basicstyle=\ttfamily\scriptsize,keywordstyle=\color{blue}, numbers=left ]
    int main(int argc, char **argv) {
      int rank, i;
      float res, total;
      MPI_Init(&argc, &argv);
      MPI_Comm_rank(MPI_COMM_WORLD, &rank);
      i = 1 << (rank + 1);
      res = 1. / i;
      MPI_Reduce(&res, &total, 1, MPI_FLOAT, MPI_SUM, 0, MPI_COMM_WORLD);
      MPI_Finalize();
      return 0;
    }
  \end{lstlisting}   
  
  \caption{\label{fig:simpleprog}MPI program computing: $\sum\limits_{i=1}^n
    \frac{1}{2^i}$}
\end{figure}

\subsection{Program state representation}

Each (abstract) local process state is a tuple $\tuple{l,\lambda} \in L\times \Lambda$, where $L$ is the set of locations and $\Lambda$ a numerical abstract lattice. In the examples of this section, $\Lambda$  is the lattice of Intervals. Moreover, we distinguish the value of $Id$ from the other variables.  


To illustrate, we give the initial configuration with $2$ processes
starting at \texttt{MPI_Init(\&argc, \&argv)} (variable
declarations are omitted) and represented as a lattice automaton. At
this point, each environment variable is set to $\top$ meaning they
are not initialized and can have any possible value.
\medskip

\begin{tikzpicture}[->,>=stealth',shorten >=1pt,auto,node distance=4cm,
    semithick]
  \scriptsize
  \centering
  
  \tikzset{initial text={}}
    \tikzstyle{every state}=[]
    
    \node[initial left, state, minimum size=15pt] (A) {$q_0$}; \node[state,
      inner sep=0pt, minimum size=15pt] (B) [right=4.5cm of A] {$q_1$};
    \node[accepting, state, inner sep=0pt, minimum size=15pt] (C) [right=4.5cm of
      B] {$q_2$};

    \path (A) edge node [yshift=6pt] {$\langle Id = [0,0], L =
      \texttt{[MPI_Init]}, \rho = \forall \lambda. \top\rangle $} (B)
    (B) edge node [yshift=6pt] {$\langle Id = [1,1], L =
      \texttt{[MPI_Init]}, \rho = \forall \lambda. \top\rangle $} (C);
  \end{tikzpicture}

\subsection{Transducer automatic generation}

Starting from a MPI/C program, the goal is to automatically generate a
lattice transducer that fully encodes the program semantics. To
achieve that, we first compute the program's Control Flow Graph
(CFG). Then, we translate each CFG transition into a lattice
transducer rule yielding the complete transducer encoding the program
semantics.


As stated before, we differentiate local instructions that affects
only one process at a time from global instructions, such as MPI
communications, that modify the global state of the program. The
translation of local instructions is straight-forward: we use
classical transfer functions that are defined in the Apron
library to evaluate the expressions and do the assignments. As shown
below, an ``if'' C statement will be translated into
two corresponding rules for both condition cases.

  \begin{minipage}[b]{.45\textwidth}
    \begin{lstlisting}[xleftmargin=.2\textwidth,language=C,basicstyle=\ttfamily\scriptsize,keywordstyle=\color{blue}, numbers=left ]
if (x > 10){
   ...  
} else { 
   ...  
} 
      ...
    \end{lstlisting}   
    \captionof*{subfigure}{C program}
  \end{minipage}
  \begin{minipage}[b]{.45\textwidth}
    \centering
    \scriptsize
    \begin{tikzpicture}[->,>=stealth',shorten >=1pt,auto,node distance=4cm,
        semithick]
      \tikzset{initial text={}}
      \tikzstyle{every state}=[]
      
      \node[initial above, state, inner sep=0pt, minimum size=10pt]
      (A) {If}; 
      
      \node[state, inner sep=0pt, minimum size=10pt](B) [below
        left=0.8cm of A] {$\text{L}_2$};

      \node[state, inner sep=0pt, minimum size=10pt](C) [below right=0.8cm
        of A] {$\text{L}_4$};

      \node[state, inner sep=0pt, minimum size=10pt](D) [below=1.3cm of
        A] {$\text{L}_6$};

      \path (A) edge [above left] node {$x > 10$} (B)
      (A) edge node {$x \leq 10$} (C)
      (B) edge [below left] node {$\dots$} (D)
      (C) edge node {$\dots$} (D);
    \end{tikzpicture}
    \captionof*{subfigure}{CFG}
  \end{minipage}

  \begin{minipage}[b]{\textwidth}
    \centering
    \footnotesize
    \begin{tikzpicture}[->,>=stealth',shorten >=1pt,auto,node distance=4cm,
        semithick] \tikzset{initial text={}} \tikzstyle{every state}=[]
      \node[initial, accepting,state,inner sep=0pt, minimum size=10pt]
      (A) {};

      \path (A) edge [loop above] node {$\top \times \texttt{[If]} \times
        \{ x \in [11,+\infty]\}~/~f : (id,L,\rho) \mapsto id,{\tt [L_2]}, \rho$}
      (A);
      \path (A) edge [out=260, in=290, looseness=12] node [below] {$\top \times \texttt{[If]} \times
        \{ x \in [-\infty, 10]\}~/~f : (id,L,\rho) \mapsto id,{\tt [L_4]}, \rho$}
      (A);
      \path (A) edge [loop right] node {$\dots$} (A);
    \end{tikzpicture}      
    \captionof*{subfigure}{Resulting transducer}
  \end{minipage}

Below is the transducer generated from all local instructions of the
MPI program depicted on~\fref{fig:simpleprog}. Note that, with this
set of local rules, there is no way to evolve from the {\tt
  MPI_Reduce} location. As mentioned in the previous section, we
dissociate the global rules from the transducer's local
rules. Therefore, this transition will be presented in the next
section. Finally, in order to model process inactivity, we add a
simple rule $\top~/~f : x \mapsto x$ meaning that any process at any
location might not evolve.

\begin{tikzpicture}[->,>=stealth',shorten >=1pt,auto,node distance=4cm,
    semithick] \tikzset{initial text={}} \tikzstyle{every state}=[]
  \scriptsize
  \centering
  \node[initial, accepting,state,inner sep=0pt, minimum size=10pt] (A) {};


  \path (A) edge [loop above, text width=12cm] node {\begin{align*}\top \times \texttt{[MPI_Init]} \times
      \top~&/~f : (id, l, \rho) \mapsto id, \texttt{[MPI_Comm_rank]}, \rho\\
      \top \times \texttt{[MPI_Comm_rank]} \times
      \top~&/~f : (id, l, \rho) \mapsto id, \texttt{[i = 1 <{}< (rank + 1)]}, \rho[rank \gets id]\\      
      \top \times \texttt{[i = 1 <{}< (rank + 1)]} \times
      \top~&/~f : (id, l, \rho) \mapsto id, \texttt{[res = 1. / i]}, \rho[i \gets 1 \text{<{}<} (\rho(rank) + 1)]
    \end{align*}
  }
  (A);

  \path (A) edge [out=260, in=290, looseness=12, text width=12cm] node [below] {\begin{align*}
      \top \times \texttt{[res = 1. / i]} \times
      \top~&/~f : (id, l, \rho) \mapsto id, \texttt{[MPI_Reduce]}, \rho[res \gets 1. / \rho(i)]\\
      \top \times \texttt{[MPI_Finalize]} \times
      \top~&/~f : (id, l, \rho) \mapsto id, \texttt{[return]}, \rho
  \end{align*}}
  (A);
  \path (A) edge [loop right] node {$\top / f : x \mapsto x$} (A);
  
\end{tikzpicture}      

\subsection{Encoding communication primitives}

Our prototype currently accepts this subset of MPI primitives :
\texttt{MPI_Send}, \texttt{MPI_Recv}, \texttt{MPI_Bcast},
\texttt{MPI_Comm_rank}, \texttt{MPI_Comm_size} and
\texttt{MPI_Reduce}. We already described the symbolic rewriting rules
in \sref{sec:abstract-semantics} except for \texttt{MPI_Comm_rank},
\texttt{MPI_Comm_size}, which returns the id of the current process
and the total number of processes, and \texttt{MPI_Reduce}. Let us
give the semantics of the last one:

\begin{Verbatim}[fontsize=\small]
MPI_Reduce(void* send_data, void* recv_data, int count, 
      MPI_Datatype datatype, MPI_Op op, int root, MPI_Comm communicator)
\end{Verbatim}

This global communication primitive gathers every process'
\texttt{send_data} buffer and applies a commutative (the order of
reduction is undefined) operator \texttt{op} between every value. The
result is then sent to the process of rank \texttt{root} at its
\texttt{recv_data} address. \texttt{count} and \texttt{datatype} are
respectively the size of these buffers and the type of each value. The
\texttt{communicator} defines a group of processes where the
communication will occur. We assume a single group.

We cannot represent this global communication in our model with only
one rule. Our solution is to break it down into three different
ones. The main idea is to spawn a ``collector'' process that will be
in charge of gathering each process' \texttt{send_data} and applying
the reduction operation on its accumulator. This collector will move
through the program state (i.e. a word of local states) by swapping,
at each iteration, with the next process. Before starting to move this
collector, we have to ensure that no involved process might
evolve. Therefore, we \emph{lock} them using a special location.  When
the collector reaches the end of the word, it sends its accumulator to
the root process through a point-to-point communication, destroys
itself and, finally, unlocks the processes. The three skeleton rules
used in our prototype are given here:
\begin{enumerate}
\item \parbox[t]{\textwidth}{
   \vspace{-2.3em} \begin{flalign*}
  G_1 &= ( \top \times \texttt{[MPI_Reduce]} \times \top ) ^*&\\
  F_1 &= (f : \_ \mapsto -1, \texttt{[Collector]}, \{ recv\_data \mapsto e \})~\cdot&\\
  &(F : ((id,l,\rho), \_) \mapsto id, \text{lock}(l), \rho) \text{ where } e \text{ is the neutral element of \texttt{op}}&\\
\end{flalign*}}
  \vspace{-2.5em}
  \item \parbox[t]{\textwidth}{
   \vspace{-2.3em} \begin{flalign*}
    G_2 &= ( \top \times \texttt{[MPI_Reduce]}_{\text{lock}} \times \top )^*~\cdot&\\
    &(\top \times \texttt{[Collector]} \times \top)~\cdot~( \top \times \texttt{[MPI_Reduce]}_{\text{lock}} \times \top )~\cdot&\\
    &( \top \times \texttt{[MPI_Reduce]}_{\text{lock}} \times \top )^*\\
    F_2 &= {\tt Id}^*~\cdot&\\
    &(f : ((id_{\text coll},l_{\text coll},\rho_{\text coll}), (id_{\text proc},l_{\text proc},\rho_{\text proc})) \mapsto id_{\text proc},l_{\text proc},\rho_{\text proc})~\cdot&\\
    &(f : ((id_{\text coll},l_{\text coll},\rho_{\text coll}), (id_{\text proc},l_{\text proc},\rho_{\text proc})) \mapsto &\\
    &~~id_{\text coll},l_{\text coll},\rho_{\text coll}[{\tt recv\_data} \gets \rho_{\text coll}({\tt recv\_data})~\sem{{\tt op}}~\rho_{\text proc}({\tt send\_data})])~\cdot&\\
    &{\tt Id}^*&
  \end{flalign*}}
\vspace{-1.5em}
  \item \parbox[t]{\textwidth}{
   \vspace{-2.3em} \begin{flalign*}
    G_3 &= ( \top \times \texttt{[MPI_Reduce]}_{\text{lock}} \times \top )^*~\cdot~({\tt root} \times \texttt{[MPI_Reduce]}_{\text{lock}} \times \top)~\cdot&\\
    &( \top \times \texttt{[MPI_Reduce]}_{\text{lock}} \times \top )^*~\cdot~(\top \times \texttt{[Collector]} \times \top)&\\
    F_3 &= (F : ((id,l,\rho), \_) \mapsto id, \texttt{[next_loc]}, \rho)&\\
    &(f : ((id_{\text root},l_{\text root},\rho_{\text root}), (id_{\text coll},l_{\text coll},\rho_{\text coll})) \mapsto&\\
    &\qquad id_{\text root},\texttt{[next_loc]},\rho_{\text proc}[{\tt recv\_data} \gets \rho_{\text coll}(\text{recv\_data})])~\cdot&\\
    &(F : ((id,l,\rho), \_) \mapsto id, \texttt{[next_loc]}, \rho)&\\
  \end{flalign*}}
    \vspace{-3em}
\end{enumerate}

\begin{figure}[ht]
  \scriptsize
\begin{subfigure}[b]{\textwidth}
  \begin{tikzpicture}[->,>=stealth',shorten >=1pt,auto,node distance=2cm,
      semithick]
    
    \tikzset{initial text={}}
    \tikzstyle{every state}=[]
    
    \node[initial left, state, minimum size=15pt] (A) {$q_0$}; 
    
    \node[state, inner sep=0pt, minimum size=15pt] (B) [right=4.2cm of A] {$q_1$};
    
    \node[accepting, state, inner sep=0pt, minimum size=15pt] (C) [right=4.2cm of
      B] {$q_2$};

    \path (A) edge node [xshift=-10pt, yshift=7pt] {$[0,0], \texttt{[MPI_Reduce]},\{ \text{res} \mapsto [\frac{1}{2},\frac{1}{2}] \} $} (B)
    (B) edge  node [xshift=10pt, yshift=7pt] {$[0,0], \texttt{[MPI_Reduce]},\{ \text{res} \mapsto [\frac{1}{4},\frac{1}{4}] \} $} (C);
  \end{tikzpicture}
  \caption{$A$}
\end{subfigure}

\begin{subfigure}[b]{\textwidth}
  \begin{tikzpicture}[->,>=stealth',shorten >=1pt,auto,node distance=2cm,
      semithick]
    
    \tikzset{initial text={}}
    \tikzstyle{every state}=[]

    \node[initial left, state, minimum size=15pt] (A) {$q_0$}; 
    
    \node[state, inner sep=0pt, minimum size=15pt] (B) [right=3cm of A] {$q_1$};
    
    \node[state, inner sep=0pt, minimum size=15pt] (C) [right=3cm of
      B] {$q_2$};

    \node[accepting, state, inner sep=0pt, minimum size=15pt] (D) [right=3cm of
      C] {$q_3$};

    \path 
    (A) edge  node [xshift=-5pt, yshift=8pt] {$[-1,-1], \texttt{[Coll]},\{ \text{total} \mapsto [0,0] \} $} (B)
    (B) edge node [below, xshift=5pt, yshift=-8pt] {$[0,0], \texttt{[MPI_Reduce]}_\text{lock},\{ \text{res} \mapsto [\frac{1}{2},\frac{1}{2}] \} $} (C)
    (C) edge  node [xshift=-5pt, yshift=8pt] {$[1,1], \texttt{[MPI_Reduce]}_\text{lock},\{ \text{res} \mapsto [\frac{1}{4},\frac{1}{4}] \} $} (D);
  \end{tikzpicture}
  \caption{$R(A)$}
\end{subfigure}

\begin{subfigure}[b]{\textwidth}
  \begin{tikzpicture}[->,>=stealth',shorten >=1pt,auto,node distance=2cm,
      semithick]
    
    \tikzset{initial text={}}
    \tikzstyle{every state}=[]

    \node[initial left, state, minimum size=15pt] (A) {$q_0$}; 
    
    \node[state, inner sep=0pt, minimum size=15pt] (B) [right=3cm of A] {$q_1$};
    
    \node[state, inner sep=0pt, minimum size=15pt] (C) [right=3cm of
      B] {$q_2$};

    \node[accepting, state, inner sep=0pt, minimum size=15pt] (D) [right=3cm of
      C] {$q_3$};

    \path 
    (A) edge node [yshift=8pt] {$[0,0], \texttt{[MPI_Reduce]}_\text{lock},\{ \text{res} \mapsto [\frac{1}{2},\frac{1}{2}] \} $} (B)
    (B) edge node [below, yshift=-8pt]  {$[-1,-1], \texttt{[Coll]},\{ \text{total} \mapsto [\frac{1}{2},\frac{1}{2}] \} $} (C)
    (C) edge  node [yshift=8pt] {$[1,1], \texttt{[MPI_Reduce]}_\text{lock},\{ \text{res} \mapsto [\frac{1}{4},\frac{1}{4}] \} $} (D);
  \end{tikzpicture}
  \caption{$R^2(A)$}
\end{subfigure}

\begin{subfigure}[b]{\textwidth}
  \begin{tikzpicture}[->,>=stealth',shorten >=1pt,auto,node distance=2cm,
      semithick]
    
    \tikzset{initial text={}}
    \tikzstyle{every state}=[]

    \node[initial left, state, minimum size=15pt] (A) {$q_0$}; 
    
    \node[state, inner sep=0pt, minimum size=15pt] (B) [right=3cm of A] {$q_1$};
    
    \node[state, inner sep=0pt, minimum size=15pt] (C) [right=3cm of
      B] {$q_2$};

    \node[accepting, state, inner sep=0pt, minimum size=15pt] (D) [right=3cm of
      C] {$q_3$};

    \path 
    (A) edge node [yshift=8pt] {$[0,0], \texttt{[MPI_Reduce]}_\text{lock},\{ \text{res} \mapsto [\frac{1}{2},\frac{1}{2}] \} $} (B)
    (B) edge  node [below, yshift=-8pt] {$[1,1], \texttt{[MPI_Reduce]}_\text{lock},\{ \text{res} \mapsto [\frac{1}{4},\frac{1}{4}] \} $} (C)
    (C) edge node [yshift=6pt]  {$[-1,-1], \texttt{[Coll]},\{ \text{total} \mapsto [\frac{3}{4},\frac{3}{4}] \} $} (D);
  \end{tikzpicture}
  \caption{$R^3(A)$}
\end{subfigure}

\begin{subfigure}[b]{\textwidth}
  \begin{tikzpicture}[->,>=stealth',shorten >=1pt,auto,node distance=2cm,
      semithick]
    
    \tikzset{initial text={}}
    \tikzstyle{every state}=[]
    
    \node[initial left, state, minimum size=15pt] (A) {$q_0$}; 
    
    \node[state, inner sep=0pt, minimum size=15pt] (B) [right=4.2cm of A] {$q_1$};
    
    \node[accepting, state, inner sep=0pt, minimum size=15pt] (C) [right=4.2cm of
      B] {$q_2$};

    \path (A) edge node [xshift=-15pt, yshift=8pt] {$[0,0], \texttt{[MPI_Finalize]},\{ \text{res} \mapsto [\frac{1}{2},\frac{1}{2}, \text{total} \mapsto [\frac{3}{4},\frac{3}{4}] \} $} (B)
      (B) edge  node [xshift=15pt, yshift=8pt] {$[1,1], \texttt{[MPI_Finalize]},\{ \text{res} \mapsto [\frac{1}{4},\frac{1}{4}] \} $} (C);
  \end{tikzpicture}
  \caption{$R^4(A)$}
\end{subfigure}
\caption{\label{fig:reduceapp}Application of $R$ on a configuration}
\end{figure}

For our example program, our prototype automatically instantiates
these rules in a set $R$. \fref{fig:reduceapp},
illustrates the iterative applications of $R$ on program state $A$
where both processes have reached the \texttt{MPI_Reduce} location by
successive application of the transducer $T$ on the initial
configuration $I$.

We give in \fref{fig:tstar} the full reachability set computed by our
prototype. For readability purposes, we do not show variables that are
not set (i.e. when their value is $\top$). We also factorize the
transitions: for each multiple transitions from a node $p$ to $q$, we
merge them into a single one and concatenate their labelled local
process states (i.e. $(p,\sigma,q);(p,\sigma',q); \Rightarrow (p,
\sigma ; \sigma', q)$).

\begin{figure}[ht]
  \centering
  \tiny
  \begin{tikzpicture}[->,>=stealth',shorten >=1pt,auto,node distance=2cm,
      semithick]
    \tikzset{initial text={}}
    \tikzstyle{every state}=[]

    \node[initial above, state, minimum size=15pt] (A) {};
    \node[state, minimum size=15pt] (B3) [below=6cm of A] {};
    \node[state, minimum size=15pt] (B2) [left of=B3] {};
    \node[state, minimum size=15pt] (B1) [left of=B2] {};
    \node[state, minimum size=15pt] (B4) [right of=B3] {};
    \node[state, minimum size=15pt] (C2) [below=6cm of B2] {};
    \node[state, minimum size=15pt] (C1) [left of=C2] {};

    \node[accepting, state, minimum size=15pt] (D) [below=6cm of C2] {};

    \path (A) edge [bend right=60] node [left, text width=2cm]
          {$[-1,-1], \texttt{[Collector]},$\\ 
            $\{ \text{total} \mapsto [0,0] \}$} (B1)

    (A) edge [bend right=20] node [left, text width=2cm]
                {$[0,0], [\texttt{Reduce}_{\text{lock}}],$\\
                  $\{ \text{rank} \mapsto [0,0],$\\
                  $\quad \text{i} \mapsto [2,2],$\\
                  $\quad \text{res} \mapsto [\frac{1}{2},\frac{1}{2}]\} $} (B2)
                
    (A) edge [] node [right, text width=3cm] 
                {$[0,0], [\texttt{MPI\_Init}],\{\}$\\
                  \quad\\
                  $[0,0], [\texttt{MPI\_Comm\_rank}],\{\}$\\
                  \quad\\
                  $[0,0], [\texttt{[i = 1 <{}< (rank + 1)]}],$\\
                  $\{ \text{rank} \mapsto [0,0]\}$\\
                  \quad\\
                  $[0,0], [\texttt{[res = 1. / i]}],$\\
                  $\{ \text{rank} \mapsto [0,0]$\\
                  $\quad i \mapsto [2, 2]\}$\\
                  \quad\\
                  $[0,0], [\texttt{Reduce}],$\\
                  $\{ \text{rank} \mapsto [0,0]$\\
                  $\quad i \mapsto [2, 2]$\\
                  $\quad \text{res} \mapsto [\frac{1}{2}, \frac{1}{2}] \}$
                } (B3)

    (A) edge [bend left=70] node [right, xshift=0.5cm, text width=3cm] 
                {$[0,0], [\texttt{Finalize}],$\\
                  $\{ \text{rank} \mapsto [0,0]$\\
                  $\quad i \mapsto [2,2]$\\
                  $\quad \text{res} \mapsto [\frac{1}{2}, \frac{1}{2}]$\\
                  $\quad \text{total} \mapsto [\frac{1}{4}, \frac{1}{4}] \}$\\
                  \quad\\
                  $[0,0], [\texttt{return}],$\\
                  $\{ \text{rank} \mapsto [0,0]$\\
                  $\quad i \mapsto [2, 2]$\\
                  $\quad \text{res} \mapsto [\frac{1}{2}, \frac{1}{2}]$\\
                  $\quad \text{total} \mapsto [3/4, 3/4] \}$
                } (B4)      

    (B4) edge [bend left=50] node [right, xshift=0.5cm, text width=3cm] 
                {$[1,1], [\texttt{Finalize}],$\\
                  $\{ \text{rank} \mapsto [1,1]$\\
                  $\quad i \mapsto [4, 4]$\\
                  $\quad \text{res} \mapsto [\frac{1}{4}, \frac{1}{4}]\}$\\
                  \quad\\
                  $[1,1], [\texttt{return}],$\\
                  $\{ \text{rank} \mapsto [1,1]$\\
                  $\quad i \mapsto [4, 4]$\\
                  $\quad \text{res} \mapsto [\frac{1}{2}, \frac{1}{2}]\}$
                } (D)      

    (B3) edge [bend left=10] node [right, xshift=0.5cm, text width=3cm] 
                {$[1,1], [\texttt{MPI\_Init}],\{\}$\\
                  \quad\\
                  $[1,1], [\texttt{MPI\_Comm\_rank}],\{\}$\\
                  \quad\\
                  $[1,1], [\texttt{[i = 1 <{}< (rank + 1)]}],$\\
                  $\{ \text{rank} \mapsto [1,1]\}$\\
                  \quad\\
                  $[1,1], [\texttt{[res = 1. / i]}],$\\
                  $\{ \text{rank} \mapsto [1,1]$\\
                  $\quad i \mapsto [4, 4]\}$\\
                  \quad\\
                  $[1,1], [\texttt{Reduce}],$\\
                  $\{ \text{rank} \mapsto [1,1]$\\
                  $\quad i \mapsto [4, 4]$\\
                  $\quad \text{res} \mapsto [\frac{1}{4}, \frac{1}{4}]\}$
                } (D)      

    (B1) edge [bend right=40] node [left, text width=2cm]
                {$[0,0], [\texttt{Reduce}_{\text{lock}}],$\\
                  $\{ \text{rank} \mapsto [0,0],$\\
                  $\quad \text{i} \mapsto [2,2],$\\
                  $\quad \text{res} \mapsto [\frac{1}{2},\frac{1}{2}] \} $
                } (C1)

    (B2) edge [bend left=20] node [left,text width=2.3cm]
                {$[-1,-1], \texttt{[Collector]},$\\ 
                  $\{ \text{total} \mapsto [\frac{1}{2},\frac{1}{2}]\}$} (C1)

    (B2) edge node [right, text width=2cm]
                {$[1,1], [\texttt{Reduce}_{\text{lock}}],$\\
                  $\{ \text{rank} \mapsto [1,1],$\\
                  $\quad \text{i} \mapsto [4,4],$\\
                  $\quad \text{res} \mapsto [\frac{1}{4},\frac{1}{4}] \} $
                } (C2)

    (C1) edge [bend right=40] node [left, text width=2cm]
                {$[1,1], [\texttt{Reduce}_{\text{lock}}],$\\
                  $\{ \text{rank} \mapsto [1,1],$\\
                  $\quad \text{i} \mapsto [4,4],$\\
                  $\quad \text{res} \mapsto [\frac{1}{4},\frac{1}{4}] \}$
                } (D)

    (C2) edge node [left, text width=2cm]
                {$[-1,-1], \texttt{[Collector]},$\\ 
                  $\{ \text{total} \mapsto [\frac{3}{4},\frac{3}{4}]\}$
                } (D);
  \end{tikzpicture}  
  \caption{\label{fig:tstar}Sum program reachability set with 2 processes}
\end{figure}

\section {Experiments}
\label{sec:exp}

We present in this section some of the analysis results of our
prototype. We found several tools that provide a formal verification
of MPI programs. One of the most advanced we found is called ``In-situ
Partial Order''~\cite{vo2009formal} (ISP). It is based on model
checking and performs a dynamic analysis in order to detect the
presence of \emph{deadlocks}. To the best of our knowledge, our tool
is the only one that computes the reachability set. We present some
examples where we verify numerical properties and although our
prototype focuses on safety properties, it can also detects that
program states (i.e. words) in our set are not matched by any rules
and therefore detect deadlocks. In these cases, we can raise an alarm
(which can be false ones due to our abstractions).

\begin{figure}[H]
  
  \centering
  \begin{tabular}{c|c|c|c|c|c|c}
    Program & LoC & nb proc. & state space size & nb nodes & nb transitions & exec. time\\
    \hline
    deadlock random & 23 & 2 & 225 & 4 & 20 & 0.5s \\

    dining philosophers & 42 & 4 & 83521 & 17 & 112 & 8s \\

    dining philosophers & 42 & 6 & 24137569 & 54 & 447 & 1232s \\
   
    \hline\hline

    sum program & 11 & 50 & $\infty$ & 200 & 601 & 86s\\

    pi approximation & 26 & 50 & $\infty$ & 200 & 751 & 104s
  \end{tabular}
  
\end{figure}


We tested our prototype on several examples. We display here the
results of significant ones. Our parameters are: the number of
processes we start with, the concrete state space size, the number of
nodes in the lattice automaton that  represents the final reachability set, its number of transitions and the
execution time. The concrete state space size is the enumeration of
all possible program states; it is infinite when there are integer variables. We prove on these examples two kinds of
properties: deadlock detection and numerical safety properties.

First is a potentially deadlocking program ``random deadlock'' where
two process tries to communicates randomly: both test a random
condition that leads respectively to a send or a receive call towards
the other process. As ISP is dynamic and depends on the MPI execution,
it will not always detect this simple deadlock. However, as we compute
the reachability set, we easily observe this deadlock and can raise an
alarm.

We implemented a MPI version of the dining philosopher problem where
philosophers and forks are processes. The forks processes will give
permission to ``pick them up'' and ``put them down'' modeled by
point-to-point communications. 
Naturally, the program has deadlocks and again the
reachability set exhibits them. The growth in computation time is
explained by the amount of possible interleavings that our algorithm
is currently not capable of filtering and by the precision we wish to
attain (thus, no strong abstractions) in order to precisely determine
the deadlocks (and not a false alarm).

The next two following examples both implement a floating point value
approximation. The first one is our
example program used in~\fref{fig:simpleprog}. The same property is
used: $\text{total} \in [0,1]$. The second one is a computation on pi
based on the approximation of $\int_{0}^{1} \frac{4}{1+x^2}$ with sums
of $n$ intervals dispatched on $n$ processes. Again the property is a
framing of the result ($\in [3,4]$) . These two examples display the
capacity of our prototype to handle real-life computations. However,
we would like to generalize these two examples to any number of
process. We can model an initial configuration with an unbounded
number of process and run our analysis on it. Unfortunately, we cannot
infer a relation between the process rank or the number of processes
with our current numerical domains. Therefore, our \emph{sum
  program}'s analysis, on an unbounded number of process, can detect
that each process computes a local $\text{result} \in [0,\frac{1}{2}]$
but the total sum will be abstracted to $[0,+\infty]$.

\section{Conclusion}
\label{sec:conclusion}

We presented a new way to do static analysis on a model of concurrent
programs that allow unicast and multicast communication as well as
dynamic process creation. We described the general framework of the
method with well-founded abstraction of the semantics and program
states. We applied our technique in order to compute reachability sets
of MPI concurrent programs with numerical abstract domains. We showed
that building such an analysis on a realistic language, such as MPI/C
programs, is feasible and yields encouraging results. Moreover,
abstract interpretation allows us to verify numerical properties which
was not done before on such programs, and the lattice automata allow
the analysis to represent (and automatically discover) regular
invariants on the whole program states.

Future work includes theoretical and practical improvements of our
analyzer,  especially the application algorithm which is currently not
optimized. One way to do that is to run a quick pre-analysis using a
simple, non-numerical abstract domain to obtain information
(e.g. rewriting rules that are never activated), so that we may
simplify the rules before using more costly numerical abstract
domains. We also wish to design a specification language allowing us
to write regular properties more easily.
We will also improve our analyser by taking into account more MPI
primitives as well as supporting general C constructs (pointers,
functions, etc.) thanks to better interactions with the other Frama-C
plug-ins. Finally, we will deal with asynchronous communications (FIFO
queues) and shared variables using non-standard semantics and/or a
reduced product with abstract domains that can efficiently abstract
these kind of data.

\bibliographystyle{plain} \bibliography{article}

\end{document}